\documentclass[a4paper, 10 pt]{article}

\newif\ifproofs

\proofstrue

\usepackage{geometry}[3cm]
\usepackage[numbers,sort&compress]{natbib}
\usepackage{amsmath} 
\usepackage{cleveref}
\usepackage{bm}
\usepackage{textcomp}

\usepackage{enumitem}

\usepackage{graphics} 
\usepackage{amssymb}  

\usepackage{amsthm}

\newcommand{\norm}[1]{\left\Vert #1\right\Vert}


\newcommand{\eqDef}{:=}

\newtheorem{theorem}{Theorem}
\newtheorem{corollary}[theorem]{Corrolary}
\newtheorem{proposition}[theorem]{Proposition}
\newtheorem{lemma}[theorem]{Lemma}
\newtheorem{remark}{Remark}
\newtheorem{assumption}{Assumption}
\newtheorem{example}{Example}

\newtheorem{definition}{Definition}

\Crefname{corollary}{Cor.}{Cors.}
\Crefname{equation}{Eq.}{Eqs.}
\Crefname{figure}{Fig.}{Figs.}
\Crefname{tabular}{Tab.}{Tabs.}
\Crefname{table}{Tab.}{Tabs.}
\Crefname{theorem}{Thm.}{Thms.}
\Crefname{definition}{Def.}{Defs.}
\Crefname{section}{Sec.}{Secs.}
\Crefname{proposition}{Prop.}{Props.}
\Crefname{assumption}{Asm.}{Asms.}
\Crefname{example}{Ex.}{Exs.}

\usepackage[colorinlistoftodos,bordercolor=orange,backgroundcolor=orange!20,linecolor=orange,textsize=scriptsize]{todonotes}
\usepackage{multirow}
\usepackage{pbox}
\usepackage{amsfonts}

\newcommand{\txt}{\textstyle}

\newcommand{\x}{x}
\newcommand{\xx}{\bm{x}}

\newcommand{\yy}{\bm{y}}

\newcommand{\xag}{X}
\newcommand{\xxag}{\bm{\xag}}
\newcommand{\yag}{Y}
\newcommand{\yyag}{\bm{\yag}}

\newcommand{\Sxag}{\widetilde{\X}} 

\newcommand{\rit}{\mathbb{R}}
\newcommand{\nit}{\mathbb{N}}

\newcommand{\X}{\mathcal{X}}
\newcommand{\T}{\mathcal{T}}
\newcommand{\I}{\mathcal{I}}
\newcommand{\G}{\mathcal{G}}
\newcommand{\B}{\mathcal{B}}

\newcommand{\NE}{\mathrm{NE}}

\newcommand{\Gna}{G} 

\newcommand{\cc}{\bm{c}} 

\newcommand{\diamX}{M} 

\newcommand{\F}{F} 

\newcommand{\Bcut}{\Gamma}
\newcommand{\stmon}{c_0} 
\newcommand{\Bc}{B_{\cc}} 
\renewcommand{\th}{\theta} 
\renewcommand{\t}{t}
\renewcommand{\i}{i}
\newcommand{\iti}{_{\i,\t}}
\newcommand{\thti}{_{\th,\t}}
\def\ub{\underline{b}}
\def\ob{\overline{b}}

\newcommand{\hx}{\hat{\x}}  
\newcommand{\hxx}{\hat{\xx}}
\newcommand{\hxag}{\hat{\xag}}
\newcommand{\hxxag}{\hat{\xxag}}

\newcommand{\eqd}{:=}

\newcommand{\dth}{\mathrm{d}\th}
\title{\LARGE \bf
Routing Game on Parallel Networks: \\ the Convergence of Atomic to Nonatomic
}

\author{{Paulin Jacquot\thanks{Paulin Jacquot is with  EDF Lab Saclay, Inria and CMAP, Ecole polytechnique, CNRS. {\tt\small paulin.jacquot@polytechnique.edu} } and Cheng Wan \thanks{Cheng Wan is with LMO, Universit\'e Paris-Sud; Inria Paris and RIIS, SUFE. {\tt\small cheng.wan.2005@polytechnique.org } {This work was supported in part by the PGMO foundation.}        }
}
%
}

\begin{document}

\maketitle

\begin{abstract}
We consider an instance of a nonatomic routing game. We assume that the network is parallel, that is, constituted of only two nodes, an origin  and a destination. We consider infinitesimal players that have a symmetric network cost, but are heterogeneous through their set of feasible strategies and  their individual utilities. We show that if an atomic routing game instance is correctly defined to approximate the nonatomic instance, then an atomic Nash Equilibrium will approximate  the nonatomic Wardrop Equilibrium. We give explicit bounds on the distance between the equilibria according to the parameters of the atomic instance. This approximation gives a method to compute the Wardrop equilibrium at an arbitrary precision.
\end{abstract}

\section{Introduction}

\textbf{Motivation.} Network routing games were first considered by Rosenthal \cite{rosenthal1973network} in their ``atomic unsplittable'' version, where a finite set of players share a network subject to congestion. Routing games found  later on many practical applications not only in transport \cite{wardrop1952some,marcotte1997equilibria}, but also in communications \cite{orda1993competitive}, distributed computing \cite{altman2002nash} or  energy \cite{atzeni2013demand}. The different models studied are of three main categories: nonatomic games (where there is a continuum of infinitesimal players), atomic unsplittable games (with a finite number of players, each one choosing a path to her destination), and atomic splittable games (where there is a finite number of players, each one choosing how to split her weight on the set of available paths).

The concept of equilibrium is central in game theory, for it corresponds to a ``stable'' situation, where no player has interest to deviate. With a finite number of players---an \emph{atomic unsplittable} game---it is captured by the concept of Nash Equilibrium \cite{nash1950equilibrium}. With an infinite number of infinitesimal  players---the \emph{nonatomic} case---the problem is different: deviations from a finite number of players have no impact, which led Wardrop to its definition of equilibria for nonatomic games \cite{wardrop1952some}.
 A typical illustration of the fundamental difference between the nonatomic and atomic splittable routing games is the existence of an exact potential function in the former case, as opposed to the latter \cite{nisan2007algorithmic}. However, when one considers the limit game of an atomic splittable game where players become infinitely many, one obtains a nonatomic instance with infinitesimal players, and expects a relationship between the atomic splittable Nash equilibria and the Wardrop equilibrium of the limit nonatomic game. This is the question we address in this paper.

\textbf{Main results.} We propose a quantitative analysis of the link between a nonatomic routing game and a family of related atomic splittable routing games, in which the number of players grows. A novelty from the existing literature is that, for nonatomic instances, we consider a very general setting where players in the continuum $[0,1]$ have specific convex strategy-sets, the profile of which being given as a mapping from $[0,1]$ to $\rit^T$. In addition to the conventional network (congestion) cost, we consider individual utility function which is also heterogeneous among the continuum of players. 
For a nonatomic game of this form, we formulate the notion of an \emph{atomic splittable approximating sequence}, composed of instances of atomic splittable games closer and closer to the nonatomic instance. 
Our main results state the convergence of Nash equilibria (NE) associated to an approximating sequence to the Wardrop equilibrium of the nonatomic instance. In particular, \Cref{thm:converge_without_u} gives the convergence of aggregate NE flows to the aggregate WE flow in $\rit^T$ in the case of convex and strictly increasing price (or congestion cost) functions without individual utility; \Cref{thm:converge_with_u} states the convergence of NE to the Wardrop equilibrium in $((\rit^T)^{[0,1]}, \norm{.}_2)$ in the case of player-specific strongly concave utility functions. For each result we provide an upper bound on the convergence rate, given from the atomic splittable instances parameters. 
An implication of these new results concerns the computation of an equilibrium of a nonatomic instance. Although computing an NE is a hard problem in general \cite{koutsoupias1999worst},   there exists several algorithms to compute an NE through its formulation  with finite-dimensional variational inequalities \cite{facchinei2007finite}. For a Wardrop Equilibrium, a similar formulation with infinite-dimensional variational inequalities can be written, but finding a solution is much harder.

\textbf{Related work.} Some results have already been given to quantify  the relation between Nash and Wardrop equilibria. Haurie and Marcotte \cite{haurie1985relationship} show that in a sequence of atomic splittable games where atomic splittable players replace themselves smaller and smaller equal-size players with constant total weight, the Nash equilibria converge to the Wardrop equilibrium of a nonatomic game. Their proof is based on the convergence of variational inequalities corresponding to the sequence of Nash equilibria, a technique similar to the one used in this paper. Wan \cite{wan2012coalition} generalizes this result to composite games where nonatomic players and atomic splittable players coexist, by allowing the atomic players to replace themselves by players with heterogeneous sizes. 

In \cite{gentile2017nash}, the authors consider an aggregative game with linear coupling constraints (generalized Nash Equilibria) and show that the Nash Variational equilibrium can be approximated with the Wardrop Variational equilibrium. However, they consider a Wardrop-type equilibrium for a finite number of players: an atomic player considers that her action has no impact on the aggregated profile. 
 They do  not study the relation between atomic and nonatomic equilibria, as done in this paper. Finally, Milchtaich \cite{milchtaich2000generic} studies atomic unsplittable and nonatomic crowding games, where players are of equal weight and each player's payoff depends on her own action and on the number of players choosing the same action. He shows that, if each atomic unsplittable player in an $n$-person finite game is replaced by $m$ identical replicas with constant total weight, the equilibria generically converge to the unique equilibrium of the corresponding nonatomic game as $m$ goes to infinity.
 Last, Marcotte and Zhu \cite{marcotte1997equilibria} consider nonatomic players with continuous types (leading to a characterization of the Wardrop equilibrium as a infinite-dimensional variational inequality) and studied the equilibrium in an aggregative game with an infinity of nonatomic players, differentiated through a linear parameter in their cost function and their feasibility sets assumed to be convex polyhedra.

\textbf{Structure.} The remaining of the paper is organized as follows:  
 in \Cref{sec:atomicNonatomicDefs}, we give the definitions of atomic splittable and nonatomic routing games. We recall the associated concepts of Nash and Wardrop equilibria, their characterization via variational inequalities, and sufficient conditions of existence. Then, in \Cref{sec:approx_games}, we give the definition of an approximating sequence of a nonatomic game, and we give our two main theorems on the convergence of the sequence of Nash equilibria to a Wardrop equilibrium of the nonatomic game. Last, in \Cref{sec:numericalExp} we provide a numerical example of an approximation of a particular nonatomic routing game.
 
 \vspace{0.5cm}
\noindent\textbf{Notation.} We use a bold font to denote vectors (e.g. $\xx$) as opposed to scalars (e.g. $x$).

\section{Splittable Routing: Atomic and Nonatomic} 
\label{sec:atomicNonatomicDefs}

\subsection{Atomic Splittable Routing Game}

An atomic splittable routing game on parallel arcs is defined with a network constituted of a finite number of parallel links (\textit{cf} \Cref{fig:network}) on which players can load some weight. Each ``link'' can be thought as a road, a communication channel or a time slot on which each user can put a load or a task. Associated to each link is a cost or ``latency'' function that depends only of the total load put on this link.
\begin{figure}[ht!]
\centering
\vspace{-0.3cm}
\begin{tikzpicture}[scale=0.35]
\node[draw,circle,scale=1] (a1)at(-5,0) {$O$};
\node[draw,circle,scale=1] (b1)at(5,0) {$D$};
\draw[->,>=latex] (a1) to[bend left=40] (b1);
\node[scale=1] (1) at (0,2.6) {$t=1, \ c_1$};
\node[scale=1] (1) at (0,1.25) {$t=2, \ c_2$};
\draw[->,>=latex] (a1) to[bend left=13] (b1);
\node[scale=1] (1) at (0,0.3) {$\cdots$};
\draw[->,>=latex] (a1) to[bend right=30] (b1) ;
\draw[->,>=latex] (a1) to[bend right=10] (b1);
\draw[->,>=latex] (a1) to (b1);
\node[scale=1] (1) at (0,-2.4) {$t=T, \ c_T$};
\end{tikzpicture}
\vspace{-0.3cm}
\caption{A parallel network with $T$ links.}
\vspace{-0.3cm}
\label{fig:network}
\end{figure}
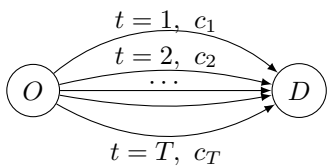
\begin{definition} \textbf{Atomic Splittable Routing Game} \\ \label{def:atomicGame}
An instance $\G$ of an atomic splittable routing game is defined by:
\begin{itemize}[leftmargin=*,wide,labelindent=-1pt]
\item a finite set of players $\I=\{1,\dots ,I\}$,
\item a finite set of arcs $\T=\{1,\dots,T\}$,
\item for each $\i\in\I$, a feasibility set $\X_i \subset \rit_+^T$,
\item for each $\i\in\I$, a utility function $u_i: \X_i \rightarrow \rit$,
\item for each $\t\in\T$, a cost or latency function $c_t(.): \rit \rightarrow \rit$ .
\end{itemize}

Each atomic player $\i\in\I$ chooses a profile $(\xx\iti)_{\t\in\T}$ in her feasible set $\X_\i$ and minimizes her cost function:
\hspace{-0.3cm}\begin{align}
\label{eq:cost_player_def}
& f_\i(\xx_\i,\xx_{-\i}):=\sum_{\t\in\T}\x\iti c_t\Big(\sum_{j\in\I} \x_{j,t}\Big) -u_\i(\xx_\i) .
\end{align}
composed of the \emph{network cost} and her utility, where $\xx_{-i} \eqDef (\xx_j)_{j\neq \i}$.
The instance $\G$  can be written as the tuple: \begin{equation}
\G=\left( \I, \T, \X , \cc, (u_i)_i\right) \ ,\end{equation}
 where $\X:=\X_1\times\dots \times \X_I$ and $\cc = (c_\t )_{\t\in\T}$.
\end{definition}
  In the remaining of this paper, the notation $\G$ will be used for an instance of an atomic game (\Cref{def:atomicGame}).

Owing to the network cost structure \eqref{eq:cost_player_def}, the aggregated load plays a central role. We denote it by $\xag_\t:= \sum_{\i\in\I} \x\iti$ on each arc $\t$,  and denote the associated feasibility set by:
\begin{equation} \label{eq:feas_set_agg_atomic}
\Sxag\eqDef \left\{ \xxag \in \rit^T \ :\ \exists \xx \in \X \text{ s.t. } \txt\sum_{\i\in\I} \xx_\i= \xxag  \right\}. 
\end{equation}

As seen in \eqref{eq:cost_player_def}, atomic splittable routing games are particular cases of aggregative games: each player's cost function depends on the actions of the others only through the aggregated profile $\xxag$.  

For technical simplification, we make the following assumptions:
\begin{assumption} \textbf{Convex costs} \label{assp_convex_costs}
Each cost function $(c_t)$ is differentiable, convex and  increasing. 
\end{assumption}
\begin{assumption} \textbf{Compact strategy sets} \label{assp_compactness}
For each $\i\in\I$, the set $\X_\i$ is assumed to be nonempty, convex and compact.
\end{assumption}

\begin{assumption}  \textbf{Concave utilities} \label{assp_concav_ut}
Each utility function $u_\i$ is differentiable and concave.
\end{assumption}

Note that under \Cref{assp_convex_costs,assp_concav_ut}, each function $f_\i$ is convex in $\xx_\i$.

An example that has drawn a particular attention is the class of atomic splittable routing games considered in \cite{orda1993competitive}. We add player-specific constraints on individual loads on each link, so that the model becomes the following.
\begin{example} \label{ex:splittable_feas_set}Each player $i$ has a weight $E_i$ to split over $\T$. In this case, $\X_\i$ is given as the simplex: \begin{equation*}
\X_\i=\{\ \xx_\i \in \rit^T_+ \ : \ \txt\sum_\t \x\iti= E_\i  \text{ and }  \underline{x}\iti \leq \x\iti \leq \overline{x}\iti\} \ .
\end{equation*}
 $E_i$ can be the mass of data to be sent over different canals, or an energy to be consumed over a set of time periods \cite{PaulinArxivAnalysisImpl17}. In the energy applications, more complex models include for instance ``ramping'' constraints $ \underline{r}\iti \leq\x_{\i,\t+1}-\x\iti \leq \overline{r}\iti $.
 \end{example}
 
\begin{example} \label{ex:utility}
An important example of utility function is the distance to a preferred profile $\yy_\i= (y\iti)_{\t \in \T}$, that is:
\begin{equation}
u_i(\xx_\i)= -\omega_i \norm{\xx_\i-\yy_\i}_2^2= -\omega_i\txt\sum_t\left( \x\iti-y\iti\right)^2 , 
\end{equation}
where $\omega_i>0$ is the value of player $\i$'s preference. 
Another type of utility function which has found many applications is :
\begin{equation} u_i(\xx_\i)= -\omega_i\log\left(1+\txt\sum_{t} \x\iti \right) \ , 
\end{equation}
which increases with the weight player $i$ can load on $\T$.

\end{example}
Below we recall the central notion of Nash Equilibrium in atomic non-cooperative games. 

\begin{definition}\textbf{{Nash Equilibrium (NE) }}

An $\NE$ of the atomic game $\G=\left( \I, \X , (f_i)_i\right)$ is a profile $\hxx \in \X$ such that for each player $\i\in \I$:
$$f_\i(\hxx_\i,\hxx_{-\i}) \leq f_\i(\xx_\i,\hxx_{-\i}), \ \forall \xx_i \in \X_i \ . $$
\end{definition}

\begin{proposition}\label{prop:GNNash} \textbf{Variational Formulation of an NE \\}
Under \Cref{assp_compactness,assp_convex_costs,assp_concav_ut}, $\hxx\in\X$ is an $\NE$ of $\G$ if and only if:
\begin{align} \label{cond:ind_opt_N}
\forall \xx \in\X, \forall\i \in \I, & \quad \big\langle \nabla_\i f_\i(\hxx_\i,\hxx_{-\i}), \xx_\i- \hxx_\i\big\rangle \geq 0 \ ,
 \end{align}
where $\nabla_\i f_\i(\hxx_\i,\hxx_{-\i})= \nabla f_\i(\cdot, \hxx_{-\i}) |_{\cdot = \hxx_{\i}}= \big(c_\t(\hxag_\t)+\hx\iti {c_\t}'(\hxag_\t)\big)_{t\in\T}-\nabla u_i(\hxx_i) $.
An equivalent condition is: 
 \begin{equation*}\label{cond:ind_opt_Nbis}
\forall \xx \in\X,\txt\sum_{\i\in \I}  \big\langle \nabla_\i f_\i(\hxx_\i,\hxx_{-\i}) , \xx_\i- \hxx_\i\big\rangle\geq 0 \ .
 \end{equation*}
\end{proposition}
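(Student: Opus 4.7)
The plan is to reduce the characterization of a Nash equilibrium to the first-order optimality condition for each player's individual minimization problem, and then verify the equivalence of the per-player and summed variational inequalities.

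First I would recall the defining property of a Nash equilibrium: $\hxx\in\X$ is an NE if and only if, for every $\i\in\I$, $\hxx_\i$ is a minimizer of the map $\xx_\i \mapsto f_\i(\xx_\i,\hxx_{-\i})$ over $\X_\i$. Under \Cref{assp_convex_costs,assp_concav_ut}, the term $\sum_t x_{\i,t}c_t(\sum_j x_{j,t})$ is convex in $\xx_\i$ (a sum of products of a linear term and a convex increasing function of an affine map in $\xx_\i$, which is convex on $\rit_+$), and $-u_\i$ is convex; hence $\xx_\i\mapsto f_\i(\xx_\i,\hxx_{-\i})$ is convex and differentiable. Together with the convexity and compactness of $\X_\i$ from \Cref{assp_compactness}, the standard first-order optimality condition for minimization of a differentiable convex function on a convex set applies, yielding
\begin{equation*}
\langle \nabla_\i f_\i(\hxx_\i,\hxx_{-\i}),\, \xx_\i-\hxx_\i\rangle \geq 0,\quad \forall \xx_\i\in\X_\i,
\end{equation*}
as a necessary and sufficient characterization. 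This gives \eqref{cond:ind_opt_N}.

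Next I would compute $\nabla_\i f_\i(\hxx_\i,\hxx_{-\i})$ explicitly by differentiating \eqref{eq:cost_player_def} with respect to $\x\iti$, using the fact that $\hxag_\t = \sum_{j}\hx_{j,t}$ depends on $\x\iti$ with derivative $1$, to obtain the announced expression $\big(c_\t(\hxag_\t)+\hx\iti c_\t'(\hxag_\t)\big)_t - \nabla u_\i(\hxx_\i)$.

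Finally, to obtain the equivalent aggregated formulation, I would note that summing the per-player inequalities \eqref{cond:ind_opt_N} over $\i\in\I$ yields the summed inequality directly. For the converse, given any $\xx_\i^\star\in\X_\i$, I would plug into the summed inequality the profile $\xx$ defined by $\xx_\i=\xx_\i^\star$ and $\xx_j=\hxx_j$ for $j\neq\i$; since $\X=\prod_j\X_j$, this profile lies in $\X$, and all the terms with $j\neq\i$ vanish, leaving exactly \eqref{cond:ind_opt_N} for player $\i$.

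There is no genuine obstacle here; the only point requiring care is the explicit verification that $f_\i(\cdot,\hxx_{-\i})$ is convex in $\xx_\i$ (so that the first-order condition is both necessary and sufficient) and the clean computation of the partial derivative of the coupling term, both of which follow from \Cref{assp_convex_costs,assp_concav_ut}.
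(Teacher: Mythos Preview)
Your proposal is correct and follows the same approach as the paper's proof, which simply invokes the convexity of $\xx_\i \mapsto f_\i(\xx_\i,\hxx_{-\i})$ to identify \eqref{cond:ind_opt_N} as the necessary and sufficient first-order optimality condition. Your version is more explicit---you spell out the convexity argument, compute the gradient, and verify the equivalence with the summed formulation---but the underlying idea is identical.
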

\begin{proof}
Since $\xx_\i \mapsto f_\i(\xx_\i,\xx_{-\i})$ is convex, \eqref{cond:ind_opt_N} is the necessary and sufficient first order condition for $\hxx_\i$ to be a minimum of $f_\i(.,\hxx_{-\i})$.
\end{proof}
\vspace{-0.2cm}
 \Cref{def:atomicGame} defines a convex minimization game so that the existence of an NE is a corollary of Rosen's results \cite{rosen1965existence}:
\begin{theorem}[Cor. of Rosen, 1965]\textbf{Existence of an NE} \\
If $\G$ is an atomic routing congestion game (\Cref{def:atomicGame}) satisfying \Cref{assp_compactness,assp_convex_costs,assp_concav_ut}, then there exists an NE of $\G$.
\end{theorem}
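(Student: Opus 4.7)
The plan is to cast the game as a concave (equivalently, convex-cost) $n$-person game in the sense of Rosen \cite{rosen1965existence} and verify that each of Rosen's hypotheses holds under Assumptions~\ref{assp_convex_costs}, \ref{assp_compactness}, \ref{assp_concav_ut}. Rosen's existence result requires: (i) for each player, a nonempty, compact, convex strategy set; (ii) joint continuity of each payoff function on the product strategy space; and (iii) convexity of each player's own cost function in that player's own variable when the others' strategies are held fixed. Items (i) and (ii) are immediate: (i) is exactly \Cref{assp_compactness}, while (ii) follows since each $c_t$ and $u_i$ is differentiable hence continuous, so the polynomial-like expression defining $f_i$ in \eqref{eq:cost_player_def} is continuous on $\X$.

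The only non-trivial step is (iii), the own-variable convexity of $f_\i$. The plan is to analyze the two terms of \eqref{eq:cost_player_def} separately. For the utility part, $-u_\i$ is convex by \Cref{assp_concav_ut}. For the network-cost part, I would fix $\xx_{-\i}$ and, for each arc $\t\in\T$, examine the scalar function $\varphi_{\i,\t}(y) \eqDef y\, c_\t(y + C_{\i,\t})$ where $C_{\i,\t} \eqDef \sum_{j\neq \i} x_{j,\t}$ is a nonnegative constant. Differentiating twice gives
\begin{equation*}
\varphi_{\i,\t}''(y) = 2\, c_\t'(y + C_{\i,\t}) + y\, c_\t''(y + C_{\i,\t}).
\end{equation*}
Under \Cref{assp_convex_costs} the function $c_\t$ is convex and increasing, so $c_\t' \geq 0$ and $c_\t'' \geq 0$; together with $y = \x\iti \geq 0$ (since $\X_\i \subset \rit_+^T$) this yields $\varphi_{\i,\t}'' \geq 0$. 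Hence each $\varphi_{\i,\t}$ is convex in the scalar $\x\iti$, and summing over $\t$ shows that $\xx_\i \mapsto \sum_{\t} \x\iti c_\t(\sum_j \x_{j,\t})$ is convex as a separable sum. Adding the convex term $-u_\i(\xx_\i)$ gives the desired convexity of $\xx_\i \mapsto f_\i(\xx_\i, \xx_{-\i})$.

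With (i)--(iii) verified, Rosen's theorem (Theorem~1 of \cite{rosen1965existence}) applies directly and yields existence of a Nash equilibrium of $\G$. The only mild subtlety—and the step I would present most carefully—is the reliance on $\x\iti \geq 0$ to kill the sign ambiguity in $\varphi_{\i,\t}''$; without the nonnegativity of loads built into the definition of the feasibility sets, the convex-increasing assumption on $c_\t$ would not suffice. No fixed-point argument needs to be invoked by hand, since this is exactly the setting handled by Rosen's original proof.
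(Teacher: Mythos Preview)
Your approach is exactly the paper's: invoke Rosen's existence theorem after observing that the game is a convex-cost game (the paper states this without proof right before the theorem). You simply fill in the verification of Rosen's hypotheses that the paper leaves implicit.

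One small technical point: in checking convexity of $\varphi_{\i,\t}(y)=y\,c_\t(y+C_{\i,\t})$ you compute $\varphi_{\i,\t}''$, which presupposes that $c_\t$ is twice differentiable, whereas \Cref{assp_convex_costs} only grants one derivative. The fix is immediate: work with the first derivative $\varphi_{\i,\t}'(y)=c_\t(y+C_{\i,\t})+y\,c_\t'(y+C_{\i,\t})$ and observe that it is nondecreasing on $\rit_+$ because $c_\t$ is increasing, $c_\t'\geq 0$, $c_\t'$ is nondecreasing (convexity of $c_\t$), and $y\geq 0$. Everything else stands.
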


Rosen \cite{rosen1965existence} gave a uniqueness theorem applying to any convex compact strategy sets, relying on a strong monotonicity condition of the operator $(\nabla_{\xx_\i} f_\i)_\i$. For  atomic splittable routing games \cite{orda1993competitive}, an NE is not unique in general \cite{bhaskar2009notunique}. To our knowledge, for atomic \emph{parallel} routing games (\Cref{def:atomicGame}) under \Cref{assp_compactness,assp_convex_costs,assp_concav_ut},  neither the uniqueness of NE nor a counter example of its uniqueness has been found. However, there are some particular cases where uniqueness has been shown, \textit{e.g.} \cite{PaulinArxivAnalysisImpl17} for the case of \Cref{ex:splittable_feas_set}. 

%
%
%

However, as we will see in the convergence theorems of \Cref{sec:approx_games},  uniqueness of NE is not  necessary to ensure the convergence of $\NE$ of a sequence of atomic unsplittable games, as any sequence of NE will converge to the unique Wardrop Equilibrium of the nonatomic game considered.


\subsection{Infinity of Players: the Nonatomic Framework}

If there is an infinity of players, the structure of the game changes: the action of a single player has a negligible impact on the aggregated load on each link. To measure the impact of infinitesimal players, we equip real coordinate spaces $\rit^k$ with the usual Lebesgue measure $\mu$.

The set of players is now represented by a continuum $\Theta=[0,1]$. Each player is of Lebesgue measure 0. 

\begin{definition} \textbf{Nonatomic Routing Game} \\ \label{def:nonatomicGame}
An instance $\Gna$ of a nonatomic routing game is defined by:\begin{itemize}[leftmargin=*,wide,labelindent=-1pt]
\item a continuum of players $\Theta=[0,1]$,
\item a finite set of arcs $\T=\{1,\dots,T\}$,
\item a point-to-set mapping of feasibility sets $\X_. : \Theta\rightrightarrows\rit_+^T$,
\item for each $\th\in\Theta$, a utility function $u_\th(.): \X_\th \rightarrow \rit$,
\item for each $\t\in\T$, a cost or latency function $c_t(.): \rit \rightarrow \rit$.
\end{itemize}

Each nonatomic player $\th$ chooses a profile $\xx_\th=(\x\thti)_{\t\in\T}$ in her feasible set $\X_\th$ and minimizes her cost function:
\begin{equation} \label{eq:user_cost_nonatomic}
\F_\th(\xx_\th,\xxag):=\sum_{\t\in\T}\x\thti c_t\Big(\xag_\t\Big) -u_\th(\xx_\th) ,
\end{equation}
where $\xag_\t\eqDef \int_{\Theta} \xx\thti \mathrm{d}\th$ denotes the aggregated load. The nonatomic instance $\Gna$  can be written as the tuple:
\begin{equation}
\Gna=\left( \Theta, \T, (\X_\th)_{\th\in\Theta} , \cc, (u_\th)_{\th\in\Theta}\right) \ .
\end{equation}
\end{definition}

For the nonatomic case, we need assumptions stronger than \Cref{assp_compactness,assp_concav_ut} for the mappings $\X_.$ and $u.$, given below:

\begin{assumption} \label{ass_X_nonat}\textbf{Nonatomic strategy sets} There exists $\diamX>0$ such that, for any $\th \in \Theta$, $\X_\th$ is convex, compact and $\X_\th \subset \mathcal{B}_0(M)$, where $\mathcal{B}_0(M)$ is the ball of radius $M$ centered at the origin. Moreover, the mapping $\th \mapsto \ \X_\th$  has a measurable graph $\Gamma_\X \eqDef\{(\th,\xx) : \th \in \Theta, \xx \in \X_\th  \}\subset \rit^{T+1}$.
\end{assumption}

\begin{assumption} \textbf{Nonatomic utilities} \label{ass_ut_nonat}  There exists $\Bcut>0$ s.t. for each $\th$, $u_\th$ is  differentiable, concave and $\norm{\nabla u_\th}_\infty < \Bcut$. The function $\Gamma_\X \ni (\th,\xx_\th) \mapsto u_\th(\xx_\th)$ is measurable.
\end{assumption}

%

\Cref{def:nonatomicGame} and \Cref{ass_X_nonat,ass_ut_nonat} give a very general framework. In many models of nonatomic games that have been considered, players are considered homogeneous  or with a finite number of classes \cite[Chapter 18]{nisan2007algorithmic}. Here, players can be heterogeneous through $\X_\th$ and $u_\th$. Games with heterogeneous players can find many applications, an example being the nonatomic equivalent of \Cref{ex:splittable_feas_set}: 
\begin{example} Let $\th \mapsto E_\th$ be a density function which designates the total demand $E_\th$ for each player $\th\in\Theta$. Consider the \emph{nonatomic splittable routing game} with feasibility sets $$\X_\th\eqDef \{\xx_\th \in\rit_+^T : \ \sum_t \x_{\th,t}= E_\th \} . $$ 
As in  \Cref{ex:splittable_feas_set}, one can consider some upper bound $\overline{\x}_{\th,t}$ and lower bound $\underline{\x}_{\th,t}$ for each $\th \in \Theta$ and each $\t\in \T$, and add the bounding constraints $\forall \t \in \T, \underline{\x}_{\th,\t}  \leq \x_{\th,\t} \leq \overline{\x}_{\th,t}$ in the definition of $\X_\th$.
\end{example}
Heterogeneity  of utility functions can also appear in many practical cases: if we consider the case of \emph{preferred profiles} given in \Cref{ex:utility}, members of a population can attribute different values to their cost and their preferences.

Since each player is infinitesimal, her action has a negligible impact on the other players' costs. Wardrop \cite{wardrop1952some} extended the notion of equilibrium to the nonatomic case.

\begin{definition} \textbf{Wardrop Equilibrium (WE) } \\
$\xx^*\in (\X_\th)_{\th}$ is a Wardrop equilibrium of the game $\Gna$ if it is a \textbf{measurable} function from $\th$ to $\X$  and for almost all $\th \in \Theta$,
\begin{equation*}
 \F_\th (\xx_\th^*, \xxag^*) \leq \F_\th (\xx_\th, \xxag^*), \ \forall \xx_\th \in \X_\th  \ ,
\end{equation*}
where $\xxag^*=\int_{\th\in \Theta}\xx^*_\th \mathrm{d}\th \in \rit^T$.
\end{definition}

\begin{proposition} \textbf{Variational formulation of a WE}\\\label{prop:cond_nash_wardrop}
Under \Cref{assp_convex_costs,ass_X_nonat,ass_ut_nonat}, $\xx^* \in \X$ is a WE of $\Gna$ iff for almost all $\th \in \Theta$:
\begin{equation}\label{cond:ind_opt}
\langle c(\xxag^*) - \nabla u_\th(\xx_\th^*), \xx_\th - \xx_\th^* \rangle \geq 0, \quad \forall \xx_\th \in \X_\th \ .
 \end{equation}
\end{proposition}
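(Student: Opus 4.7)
The plan is to reduce the Wardrop equilibrium condition, which is a pointwise optimality condition for almost every player $\th \in \Theta$, to a standard first-order variational inequality on the convex set $\X_\th$. The key observation is that, once we fix the aggregate $\xxag^*$ coming from the candidate WE, the map $\xx_\th \mapsto \F_\th(\xx_\th, \xxag^*)$ is a convex function of $\xx_\th$ only. Indeed, in \eqref{eq:user_cost_nonatomic} the coefficients $c_t(\xag_t^*)$ are fixed scalars (since a single infinitesimal player cannot move the aggregate), so the first term is linear in $\xx_\th$, and the second term $-u_\th(\xx_\th)$ is convex by \Cref{ass_ut_nonat}.

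First I would fix $\th \in \Theta$ and consider the auxiliary problem $\min_{\xx_\th \in \X_\th} \F_\th(\xx_\th, \xxag^*)$. By \Cref{ass_X_nonat} the set $\X_\th$ is nonempty, convex and compact, and by the previous remark $\F_\th(\cdot, \xxag^*)$ is convex and differentiable with
\begin{equation*}
\nabla_{\xx_\th} \F_\th(\xx_\th, \xxag^*) \;=\; \bigl(c_t(\xag_t^*)\bigr)_{t\in \T} - \nabla u_\th(\xx_\th) \;=\; c(\xxag^*) - \nabla u_\th(\xx_\th) .
\end{equation*}
Standard convex optimization (first-order optimality over a convex set) then gives that $\xx_\th^*$ minimizes $\F_\th(\cdot,\xxag^*)$ on $\X_\th$ if and only if
\begin{equation*}
\langle c(\xxag^*) - \nabla u_\th(\xx_\th^*),\ \xx_\th - \xx_\th^* \rangle \geq 0, \quad \forall \xx_\th \in \X_\th .
\end{equation*}

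To obtain the proposition, I would apply this equivalence at each $\th$ in the full-measure set on which the WE inequality holds. In the forward direction, if $\xx^*$ is a WE then by definition the pointwise minimization property holds almost everywhere, hence the variational inequality \eqref{cond:ind_opt} holds almost everywhere. Conversely, if \eqref{cond:ind_opt} holds almost everywhere, the convexity of $\F_\th(\cdot,\xxag^*)$ upgrades the first-order condition to the global inequality $\F_\th(\xx_\th^*, \xxag^*) \leq \F_\th(\xx_\th, \xxag^*)$ for all $\xx_\th \in \X_\th$, almost everywhere in $\th$, which is precisely the WE definition.

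There are essentially no hard computations; the only nontrivial point is a measure-theoretic one, and I expect it to be the main thing worth noting carefully. One must check that a candidate $\xx^*$ satisfying \eqref{cond:ind_opt} a.e.\ is indeed a \emph{measurable} selection so that the aggregate $\xxag^* = \int_\Theta \xx_\th^* \dth$ is well-defined; this is not an additional hypothesis because the WE definition already requires measurability, and \Cref{ass_X_nonat,ass_ut_nonat} ensure that the data $\th \mapsto \X_\th$ and $\th \mapsto u_\th$ are jointly measurable (through $\Gamma_\X$), so the pointwise equivalence above is compatible with measurability. With this caveat spelled out, the proof reduces to two lines: apply the standard first-order optimality characterization on each $\X_\th$, and quantify ``for almost all $\th$'' to pass between the game-theoretic and variational formulations.
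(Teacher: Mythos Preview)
Your argument is correct and follows essentially the same approach as the paper, which simply observes that, given the aggregate $\xxag^*$, \eqref{cond:ind_opt} is the necessary and sufficient first-order optimality condition for $\xx_\th^*$ to minimize the convex function $\F_\th(\cdot,\xxag^*)$ over $\X_\th$. Your additional remarks on measurability go beyond what the paper spells out but are harmless and consistent with the setting.
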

\begin{proof}\
Given $\xxag^*$, \eqref{cond:ind_opt} is the necessary and sufficient first order condition for $\xx^*_\th$ to be a minimum point of the convex function $\F_\th (., \xxag^*)$.
\end{proof}

According to \eqref{cond:ind_opt}, the monotonicity of $c$ is sufficient to have the VI characterization of the equilibrium in the nonatomic case, as opposed to the atomic case in \eqref{cond:ind_opt_N} where monotonicity \emph{and} convexity of $c$ are needed.
\begin{theorem}[Cor. of Rath, 1992 \cite{rath1992direct}] \textbf{Existence of a WE} \\
If $\Gna$ is a nonatomic routing congestion game (\Cref{def:nonatomicGame}) satisfying \Cref{assp_convex_costs,ass_X_nonat,ass_ut_nonat}, then $\Gna$ admits a WE.
%
%
%
\end{theorem}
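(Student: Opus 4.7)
The plan is to verify the hypotheses of Rath's 1992 existence theorem, which gives pure-strategy equilibria for nonatomic games whose payoffs depend on players' own actions and the aggregate action profile. The nonatomic instance $\Gna$ fits this template exactly: player $\th$'s cost $\F_\th(\xx_\th, \xxag)$ depends on $\xx_\th$ and on $\xxag = \int_\Theta \xx_\th \, \mr{d}\th$. By \Cref{ass_X_nonat}, every strategy set lies inside $\mathcal{B}_0(\diamX)$, so the aggregate lives in the compact convex set $K \eqDef \mathcal{B}_0(\diamX) \subset \rit^T$.

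The next step is to study the best-response correspondence
\begin{equation*}
BR_\th(\xxag) \eqDef \argmin{\xx_\th \in \X_\th} \F_\th(\xx_\th, \xxag).
\end{equation*}
For fixed $\xxag$, the map $\xx_\th \mapsto \F_\th(\xx_\th, \xxag)$ is continuous on the compact set $\X_\th$, and convex because $c_\t(\xag_\t)$ enters only as a scalar coefficient of $\x_{\th,\t}$, while $-u_\th$ is convex by \Cref{ass_ut_nonat}. Hence $BR_\th(\xxag)$ is nonempty, convex and compact. Berge's maximum theorem, together with continuity of $c$ (\Cref{assp_convex_costs}) and of $u_\th$, gives upper hemicontinuity in $\xxag$. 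The measurability conditions on the graph of $\X_.$ and on $(\th, \xx_\th) \mapsto u_\th(\xx_\th)$ in \Cref{ass_X_nonat,ass_ut_nonat} allow Aumann's measurable selection theorem to produce a measurable selection $\th \mapsto \xx_\th \in BR_\th(\xxag)$.

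Third, I would form the aggregate best-response correspondence
\begin{equation*}
A(\xxag) \eqDef \Bigl\{ \txt\int_\Theta \xx_\th \, \mr{d}\th : \xx_\th \in BR_\th(\xxag) \text{ and } \th \mapsto \xx_\th \text{ measurable} \Bigr\}.
\end{equation*}
Because $\Theta=[0,1]$ carries the atomless Lebesgue measure, Lyapunov's convexity theorem (applied to Aumann's integral) makes $A(\xxag)$ convex; boundedness of the $\X_\th$ keeps $A(\xxag) \subset K$ and closed. Upper hemicontinuity of $A$ in $\xxag$ follows from upper hemicontinuity of $BR_\th$ and dominated convergence. I then apply Kakutani's fixed point theorem on the compact convex set $K$ to obtain $\xxag^* \in A(\xxag^*)$; unfolding the definition of $A$ yields a measurable $\xx^* : \Theta \to \rit^T$ with $\xx^*_\th \in \X_\th$, $\int_\Theta \xx^*_\th\, \mr{d}\th = \xxag^*$, and $\xx^*_\th \in BR_\th(\xxag^*)$ for a.e.\ $\th$, which is by definition a WE.

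The delicate point, and the real content of Rath's theorem, is the combination of measurability and convexification: one must simultaneously know (i) that the pointwise best-response correspondence admits a measurable selection for every aggregate $\xxag$, and (ii) that integrating such selections yields a convex, upper hemicontinuous correspondence on $K$. Both are guaranteed by the measurable-graph hypothesis in \Cref{ass_X_nonat}, measurability in \Cref{ass_ut_nonat}, and the atomlessness of $\mu$ on $\Theta$; once these are in hand, Kakutani finishes the argument. Everything else, including continuity of $\F_\th$ in the aggregate and compactness of $K$, is an immediate consequence of the stated assumptions.
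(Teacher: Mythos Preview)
Your proposal is correct and follows the same approach as the paper: both verify that the hypotheses of Rath's 1992 existence theorem are met. The paper's proof is a one-line invocation of Rath together with the remark that only continuity of the $c_t$ and $u_\th$ is needed, whereas you go further and sketch the fixed-point argument (Berge, Aumann selection, Lyapunov convexity, Kakutani) that underlies Rath's result; this extra detail is sound but not required once Rath is cited.
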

\begin{proof} 
The conditions required in \cite{rath1992direct} are satisfied. Note that we only need $(c_t)_t$ and $(u_\th)_{\th\in\Theta}$ to be continuous functions.\end{proof}
The variational formulation of a WE given in \Cref{prop:cond_nash_wardrop} can be written in the closed form:
\begin{theorem}\label{thm:agg_wardrop}
Under \Cref{assp_convex_costs,ass_X_nonat,ass_ut_nonat}, $\xx^* \in \X$ is a WE of $\Gna$ iff:
\begin{equation}\label{cond:agg_eq}
\int_{\th\in \Theta} \langle \cc(\xxag^*) - \nabla u_\th(\xx_\th^*), \xx_\th - \xx_\th^* \rangle \mathrm{d}\th \geq 0,\quad \forall \xx\in \X \ .
 \end{equation} 
\end{theorem}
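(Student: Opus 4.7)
My plan is to prove the two directions separately, the forward one being immediate and the reverse one being the substantive part.

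For the forward direction ($\xx^*$ a WE $\Rightarrow$ \eqref{cond:agg_eq}), I would use \Cref{prop:cond_nash_wardrop}: the pointwise inequality \eqref{cond:ind_opt} holds for a.e.\ $\th$ and every $\xx_\th \in \X_\th$. Specializing $\xx_\th$ to $\xx_\th$ for any measurable selection $\xx \in \X$, the integrand in \eqref{cond:agg_eq} is non-negative for a.e.\ $\th$, so its integral over $\Theta$ is non-negative. Measurability of the integrand follows from measurability of $\xx^*$, $\xx$, and of $\th \mapsto \nabla u_\th(\xx_\th^*)$ (the last by \Cref{ass_ut_nonat} combined with the measurability of $\xx^*$).

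For the reverse direction, I would argue by contradiction. Suppose \eqref{cond:agg_eq} holds but \eqref{cond:ind_opt} fails: then there exists a measurable set $A \subset \Theta$ of positive Lebesgue measure on which, for each $\th \in A$, the pointwise VI is violated by some $\widetilde{\xx}_\th \in \X_\th$. The plan is to patch these violating deviations into a single measurable profile $\xx' \in \X$ which equals $\xx^*$ off of $A$, so that
\[
\int_\Theta \langle \cc(\xxag^*) - \nabla u_\th(\xx_\th^*), \xx'_\th - \xx_\th^* \rangle \dth = \int_A \langle \cc(\xxag^*) - \nabla u_\th(\xx_\th^*), \widetilde{\xx}_\th - \xx_\th^* \rangle \dth < 0,
\]
contradicting \eqref{cond:agg_eq}.

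The main obstacle is ensuring that the pointwise choice $\th \mapsto \widetilde{\xx}_\th$ can be taken to be \emph{measurable}. This is where \Cref{ass_X_nonat} (measurable graph of $\X_\cdot$) and \Cref{ass_ut_nonat} (measurability of $(\th,\xx) \mapsto u_\th(\xx)$) enter crucially. I would define the multifunction
\[
\Phi(\th) \eqd \{ \xx \in \X_\th : \langle \cc(\xxag^*) - \nabla u_\th(\xx_\th^*), \xx - \xx_\th^*\rangle < 0 \},
\]
verify that its graph is measurable (as the intersection of the measurable graph $\Gamma_\X$ with the preimage of $(-\infty, 0)$ under a Carathéodory map), and then invoke the Aumann measurable selection theorem to produce a measurable selector on $A$, which I glue with $\xx^*$ on $\Theta \setminus A$.

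Finally, to formalize the initial step of extracting the set $A$, I would note that the pointwise condition \eqref{cond:ind_opt} failing on a null set is exactly the statement of \Cref{prop:cond_nash_wardrop}; if it fails for $\xx^*$, then by definition of ``a.e.'' the set of $\th$ where some violating $\xx_\th$ exists has positive measure, which is exactly the $A$ needed above. This closes the contradiction and completes the proof.
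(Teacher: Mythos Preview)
Your proof is correct and follows essentially the same route as the paper: integrate the pointwise inequality \eqref{cond:ind_opt} for the forward direction, and for the converse argue by contradiction, patching the violating choices on the bad set with $\xx^*$ off of it to produce a profile making the integral negative. You are in fact more careful than the paper, which simply picks $\yy_\th$ on the bad set without addressing measurability; your appeal to the measurable-graph hypothesis and Aumann's selection theorem fills a gap that the paper's proof leaves implicit.
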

\begin{proof} This follows from \Cref{prop:cond_nash_wardrop}. 
\ifproofs
If $\xx^*\in \X$ is a Wardrop equilibrium so that \eqref{cond:ind_opt} holds for almost all $\th\in \Theta$, then \eqref{cond:agg_eq} follows straightforwardly.

Conversely, suppose that $\xx^*\in \X$ satisfies condition \eqref{cond:agg_eq} but is not a WE of $G$. Then there must be a subset $S$ of $\Theta$ with strictly positive measure such that for each $\th \in S$, \eqref{cond:ind_opt} does not hold: for each $\th \in S$, there exists $\yy_\th \in \X_\th$ such that 
\begin{equation*}
\langle \cc(\xxag^*) - \nabla u_\th(\xx_\th^*), \yy_\th - \xx_\th^* \rangle < 0
 \end{equation*}
 For each $\th\in \Theta\setminus S$, let $\yy_\th \eqDef \xx^*_\th$. Then $\yy=(\yy_\th)_{\th\in \Theta}\in \X$, and
\begin{align*}
& \int_{\th\in \Theta} \langle \cc(\xxag^*) - \nabla u_\th(\xx_\th^*), \yy_\th - \xx_\th^* \rangle \text{d}\th\\
&= \int_{\th\in S} \langle \cc(\xxag^*) - \nabla u_\th(\xx_\th^*), \yy_\th - \xx_\th^* \rangle \text{d}\th <0  
\end{align*}
contradicting \eqref{cond:agg_eq}.
\else Details are omitted. \fi
\end{proof}

\begin{corollary}\label{cor:agg_wardrop_wou}
In the case where $u_\th\equiv 0$ for all $\th\in \Theta$,  under \Cref{assp_convex_costs,ass_X_nonat}, $\xx^*\in \X$ is a WE of $\Gna$ iff:
\begin{equation}\label{cond:agg_eq_wou}
 \langle \cc(\xxag^*) , \xxag - \xxag^* \rangle  \geq 0, \quad \forall \xxag\in \tilde{\X}\ .
 \end{equation} 
\end{corollary}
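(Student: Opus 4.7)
The plan is to derive this corollary directly from \Cref{thm:agg_wardrop} by specializing to the case $u_\th \equiv 0$ and exploiting the fact that $\cc(\xxag^*)$ does not depend on $\th$.

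For the forward direction, I would start from \Cref{thm:agg_wardrop}: $\xx^*$ is a WE iff for every $\xx \in \X$,
\begin{equation*}
\int_{\th \in \Theta} \langle \cc(\xxag^*), \xx_\th - \xx_\th^* \rangle \, \dth \geq 0.
\end{equation*}
Since $\cc(\xxag^*) \in \rit^T$ is constant in $\th$, linearity of the integral lets me pull it outside, giving
\begin{equation*}
\Big\langle \cc(\xxag^*), \int_{\th \in \Theta} (\xx_\th - \xx_\th^*) \, \dth \Big\rangle = \langle \cc(\xxag^*), \xxag - \xxag^* \rangle \geq 0,
\end{equation*}
where $\xxag \eqDef \int_\Theta \xx_\th \, \dth$. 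By the definition of $\Sxag$ in \eqref{eq:feas_set_agg_atomic} extended to the nonatomic setting (the image of $\X$ under the aggregation map $\xx \mapsto \int_\Theta \xx_\th \, \dth$), this yields \eqref{cond:agg_eq_wou}.

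For the converse, given any $\xxag \in \Sxag$, by definition there exists a measurable selection $\xx \in \X$ with $\int_\Theta \xx_\th \, \dth = \xxag$. Reversing the manipulation above, \eqref{cond:agg_eq_wou} then implies \eqref{cond:agg_eq} with $\nabla u_\th \equiv 0$, and \Cref{thm:agg_wardrop} concludes that $\xx^*$ is a WE.

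The only nontrivial point is the equivalence between quantifying over $\xx \in \X$ (measurable selections) and quantifying over $\xxag \in \Sxag$; this is exactly the definition of $\Sxag$ as the set of aggregates of feasible profiles, so measurability of selections (guaranteed by \Cref{ass_X_nonat} via the measurable-graph condition) is what makes the two conditions interchangeable. No further analytic machinery beyond \Cref{thm:agg_wardrop} is needed.
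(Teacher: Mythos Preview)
Your proposal is correct and is exactly the argument implicit in the paper: the corollary is stated without proof, as an immediate specialization of \Cref{thm:agg_wardrop} obtained by setting $\nabla u_\th\equiv 0$ and pulling the $\th$-independent vector $\cc(\xxag^*)$ outside the integral, together with the definition of $\Sxag$ as the image of $\X$ under aggregation. Nothing further is needed.
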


From the characterization of the WE in \Cref{thm:agg_wardrop} and \Cref{cor:agg_wardrop_wou}, we derive \Cref{thm:unique_wardrop,th:unique_agg_WE} that state simple  conditions ensuring the uniqueness of WE in $\Gna$.

\begin{theorem}\label{thm:unique_wardrop}
Under \Cref{assp_convex_costs,ass_X_nonat,ass_ut_nonat}, if $u_\th$ is strictly concave for each $\th \in \Theta$, then $G$ admits a unique WE.
\end{theorem}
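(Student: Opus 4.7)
The plan is to derive uniqueness from the integral variational characterization of the WE in \Cref{thm:agg_wardrop}, exploiting the fact that strict concavity of $u_\th$ makes $-\nabla u_\th$ strictly monotone while convexity/monotonicity of each $c_t$ makes $\cc$ monotone on aggregate profiles.

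First I would suppose, for contradiction, that $\xx^*$ and $\yy^*$ are two distinct WEs of $\Gna$, with aggregates $\xxag^*$ and $\yyag^*$. Applying \eqref{cond:agg_eq} to $\xx^*$ tested against $\yy^*$, and \eqref{cond:agg_eq} to $\yy^*$ tested against $\xx^*$, I obtain the two inequalities
\begin{align*}
\int_\Theta \big\langle \cc(\xxag^*) - \nabla u_\th(\xx_\th^*),\ \yy_\th^* - \xx_\th^*\big\rangle\,\dth &\geq 0, \\
\int_\Theta \big\langle \cc(\yyag^*) - \nabla u_\th(\yy_\th^*),\ \xx_\th^* - \yy_\th^*\big\rangle\,\dth &\geq 0.
\end{align*}
Adding them and regrouping yields
\begin{equation*}
\int_\Theta \big\langle \cc(\yyag^*) - \cc(\xxag^*),\ \yy_\th^* - \xx_\th^*\big\rangle\,\dth + \int_\Theta \big\langle \nabla u_\th(\xx_\th^*) - \nabla u_\th(\yy_\th^*),\ \yy_\th^* - \xx_\th^*\big\rangle\,\dth \leq 0.
\end{equation*}

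Next I would analyse the two integrals separately. The first integrand does not depend on $\th$ inside $\cc(\cdot)$, so the integral collapses to $\langle \cc(\yyag^*)-\cc(\xxag^*),\ \yyag^*-\xxag^*\rangle$, which is nonnegative since each $c_t$ is increasing (hence the component-wise vector map $\cc$ is monotone). For the second integral, concavity of each $u_\th$ makes $-\nabla u_\th$ monotone, hence $\langle \nabla u_\th(\xx_\th^*) - \nabla u_\th(\yy_\th^*),\ \yy_\th^* - \xx_\th^*\rangle \geq 0$ for every $\th$; moreover, strict concavity of $u_\th$ makes this inequality strict whenever $\xx_\th^* \neq \yy_\th^*$.

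Combining these observations, the displayed inequality forces both integrands to vanish almost everywhere. In particular the pointwise nonnegative integrand $\langle \nabla u_\th(\xx_\th^*) - \nabla u_\th(\yy_\th^*),\ \yy_\th^* - \xx_\th^*\rangle$ must be zero for almost every $\th$, which by strict concavity implies $\xx_\th^* = \yy_\th^*$ for almost every $\th$; thus $\xx^* = \yy^*$ as measurable functions. The main technical subtlety is ensuring measurability of the integrands (so that the use of \Cref{thm:agg_wardrop} is legitimate) and carefully splitting the aggregate term from the player-specific term; both are handled by \Cref{ass_X_nonat,ass_ut_nonat}, which guarantee that the graph of $\X_\cdot$ is measurable and that $(\th,\xx_\th)\mapsto u_\th(\xx_\th)$ is measurable with a uniform gradient bound.
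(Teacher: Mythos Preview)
Your proposal is correct and follows essentially the same approach as the paper: both argue by applying the integral variational inequality \eqref{cond:agg_eq} to two supposed WEs tested against each other, adding the resulting inequalities, splitting off the aggregate term $\langle \cc(\xxag^*)-\cc(\yyag^*),\xxag^*-\yyag^*\rangle\geq 0$ by monotonicity of $\cc$, and then using strict concavity of each $u_\th$ (i.e.\ strict monotonicity of $-\nabla u_\th$) to force $\xx_\th^*=\yy_\th^*$ almost everywhere. The only cosmetic difference is that the paper rearranges signs slightly before invoking monotonicity, and does not dwell on the measurability remark you add at the end.
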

\ifproofs
\begin{proof}

Suppose that $\xx\in \X$ and $\yy\in \X$ are both WE of the game. Let $X=\int_{\th\in \Theta}\xx_\th \text{d}\th$ and $Y=\int_{\th\in \Theta}\yy_\th \text{d}\th$. Then, according to Theorem \ref{thm:agg_wardrop}, 
\begin{align}
& \int_{\th\in \Theta} \langle \cc(\xxag) - \nabla u_\th(\xx_\th), \yy_\th - \xx_\th \rangle \text{d}\th \geq 0 \label{eq:1}\\
& \int_{\th\in \Theta} \langle \cc(\yyag) - \nabla u_\th(\yy_\th), \xx_\th - \yy_\th \rangle \text{d}\th \geq 0 \label{eq:2}
\end{align}
By adding \eqref{eq:1} and \eqref{eq:2}, one has
\begin{align*}
  \int_{\th\in \Theta} & \langle \cc(\xxag)-\cc(\yyag) - \nabla u_\th(\xx_\th)+\nabla u_\th(\yy_\th), \yy_\th - \xx_\th \rangle \text{d}\th \geq 0\\
 \Rightarrow\; & \langle \cc(\xxag)-\cc(\yyag), \int_{\th\in \Theta}(\yy_\th - \xx_\th) \text{d}\th \rangle  + \int_{\th\in \Theta} \langle  - \nabla u_\th(\xx_\th)+\nabla u_\th(\yy_\th), \yy_\th - \xx_\th \rangle \text{d}\th \geq 0\\
 \Rightarrow\; & \langle \cc(\xxag)-\cc(\yyag), X-Y\rangle  + \int_{\th\in \Theta}\langle  - \nabla u_\th(\xx_\th)+\nabla u_\th(\yy_\th), \xx_\th - \yy_\th \rangle \text{d}\th \leq 0
\end{align*}
Since for each $\th$, $u_\th$ is strictly concave,  $\nabla u_\th$ is thus strictly monotone. Therefore, for each $\th\in \Theta$, $ \langle - \nabla u_\th(\xx_\th)+\nabla u_\th(\yy_\th), \xx_\th - \yy_\th \rangle \geq 0$ and equality holds if and only if $\xx_\th=\yy_\th$. Besides, $c$ is monotone, hence $ \langle \cc(\xxag)-\cc(\yyag), X-Y\rangle\geq 0$. Consequently, $\langle \cc(\xxag)-\cc(\yyag), X-Y\rangle + \int_{\th\in \Theta}\langle  - \nabla u_\th(\xx_\th)+\nabla u_\th(\yy_\th), \xx_\th - \yy_\th \rangle \text{d}\th \geq 0$, and equality holds if and only if for almost all $\th\in \Theta$, $\xx_\th=\yy_\th$. (In this case, $X=Y$.)
\end{proof}
\else
\fi

\begin{theorem} \label{th:unique_agg_WE}
In the case where $u_\th\equiv 0$ for all $\th\in \Theta$, under \Cref{assp_convex_costs,ass_X_nonat},  if $\cc=(c_\t)_{t=1}^T: [0,M]^T \rightarrow \rit^T$ is a strictly monotone operator, then all the WE of $\Gna$ have the same aggregate profile $\xxag^* \in \Sxag$.
\end{theorem}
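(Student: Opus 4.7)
The plan is to invoke \Cref{cor:agg_wardrop_wou}, which in the utility-free setting characterizes a WE purely in terms of its aggregate profile, and then use the standard two-point trick from variational inequality theory to derive uniqueness from strict monotonicity.

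Concretely, I would start by assuming that $\xx^*$ and $\yy^*$ are two WE of $\Gna$, with aggregates $\xxag^*=\int_\Theta \xx^*_\th \, \dth$ and $\yyag^*=\int_\Theta \yy^*_\th \, \dth$. Both $\xxag^*$ and $\yyag^*$ lie in $\Sxag$ by definition \eqref{eq:feas_set_agg_atomic} adapted to the nonatomic setting. By \Cref{cor:agg_wardrop_wou}, applied to each of the two equilibria, one has
\begin{align*}
\langle \cc(\xxag^*), \xxag - \xxag^* \rangle &\geq 0, \quad \forall \xxag \in \Sxag,\\
\langle \cc(\yyag^*), \xxag - \yyag^* \rangle &\geq 0, \quad \forall \xxag \in \Sxag.
\end{align*}
The key step is to test the first inequality with $\xxag=\yyag^*$ and the second with $\xxag=\xxag^*$, which gives $\langle \cc(\xxag^*), \yyag^*-\xxag^*\rangle \geq 0$ and $\langle \cc(\yyag^*), \xxag^*-\yyag^*\rangle \geq 0$. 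Adding these two inequalities and rearranging yields
\begin{equation*}
\langle \cc(\xxag^*)-\cc(\yyag^*), \xxag^*-\yyag^*\rangle \leq 0.
\end{equation*}

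Now I would invoke strict monotonicity of $\cc$ on $[0,\diamX]^T$: the reverse inequality $\langle \cc(\xxag^*)-\cc(\yyag^*), \xxag^*-\yyag^*\rangle \geq 0$ is the monotonicity hypothesis, and equality forces $\xxag^* = \yyag^*$. Combining the two conclusions gives $\xxag^*=\yyag^*$, establishing that the aggregate profile is unique across all WE. There is no substantive obstacle here beyond checking that $\xxag^*,\yyag^*\in [0,\diamX]^T$ so that the monotonicity hypothesis applies; this is immediate from \Cref{ass_X_nonat} since each $\xx_\th^*\in \X_\th \subset \mathcal{B}_0(\diamX)$ and averaging preserves the bound. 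Note that only strict monotonicity of $\cc$ (not convexity of each $c_\t$) is used in this argument, mirroring the remark made after \Cref{prop:cond_nash_wardrop}.
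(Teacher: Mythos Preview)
Your proof is correct and follows essentially the same route as the paper's own argument: invoke \Cref{cor:agg_wardrop_wou} for two WE, test each variational inequality at the other equilibrium's aggregate, add, and conclude via strict monotonicity of $\cc$. Your explicit verification that $\xxag^*,\yyag^*\in[0,\diamX]^T$ is a nice touch that the paper leaves implicit.
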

\ifproofs
\begin{proof}
Suppose that $\xx\in \X$ and $\yy\in \X$ are both WE of the game. Let $\xxag=\int_{\th\in \Theta}\xx_\th \text{d}\th$ and $\yyag=\int_{\th\in \Theta} \yy_\th \text{d}\th$. Then, according to Corollary \ref{cor:agg_wardrop_wou}, 
\begin{align}
&  \langle \cc(\xxag) , \yyag - \xxag \rangle  \geq 0 \label{eq:1b}\\
&  \langle \cc(\yyag) , \xxag - \yyag \rangle  \geq 0 \label{eq:2b}
\end{align}
By adding \eqref{eq:1b} and \eqref{eq:2b}, one has
\begin{equation*}
  \langle \cc(\xxag)-\cc(\yyag) , \yyag -\xxag \rangle  \geq 0\\
\end{equation*}
Since $c$ is strictly monotone, $ \langle \cc(\xxag)-\cc(\yyag), \xxag-\yyag\rangle\geq 0$ and equality holds if and only $\xxag=\yyag$. Consequently, $\xxag=\yyag$.
\end{proof}
\else
\fi
\begin{remark}
If for each $t \in \T$, $c_\t(.)$ is (strictly) increasing, then $\cc$ is a (strictly) monotone operator from $[0,M]^T \rightarrow \rit^T$.
\end{remark}

One expects that, when the number of players grows very large in an atomic splittable game, the game gets close to a nonatomic game in some sense. We confirm this intuition by showing that, considering a sequence of equilibria of approximating atomic games of a nonatomic instance, the sequence will converge to an equilibrium of the nonatomic instance.
 
\section{Approximating Nonatomic Games}
\label{sec:approx_games}
\newcommand{\nn}{\bm{n}}
\newcommand{\snu}{\nu} 
\newcommand{\esnu}{^{(\snu)}}
\newcommand{\dset}{\delta} 
\newcommand{\mdset}{\overline{\delta}} 

\newcommand{\duti}{d} 
\newcommand{\mduti}{\overline{d}} 
\newcommand{\mmu}{\overline{\mu}} 

To approximate the nonatomic game $\Gna$, the idea consists in finding a sequence of atomic games $(\G\esnu)$ with an increasing number of players, each player representing a ``class'' of nonatomic players, similar in their parameters. 

As the players $\th\in\Theta$ are differentiated through  $\X_\th$ and  $u_\th$, we need to formulate the convergence of feasibility sets and utilities of atomic instances to the nonatomic parameters.

\subsection{Approximating the nonatomic instance}

\begin{definition} \label{def:approx_seq} \textbf{Atomic Approximating Sequence (AAS)} \\
A sequence of atomic games $\G\esnu=\big(\I\esnu,\T,\X\esnu ,\cc,(u_i\esnu)_i\big)$ is an approximating sequence (AAS) for the nonatomic instance $\Gna=\big(\Theta,\T,(\X_\th)_\th,\cc,(u_\th)_\th\big)$ if for each $\snu \in \nit$, there exists a partition  of cardinal $I\esnu$ of  set $\Theta$, denoted by $(\Theta_\i\esnu)_{i\in \I\esnu}$, such that:
\begin{itemize}[leftmargin=*,wide,labelindent=-1pt]
\item $I\esnu \longrightarrow +\infty$,
\item $\mmu\esnu \eqDef\max_{\i \in\I\esnu}\mu_i\esnu \longrightarrow 0$  where $\mu_\i\esnu\eqDef\mu(\Theta_i\esnu)$ is the Lebesgue measure of subset $\Theta_i\esnu$,
\item $\mdset\esnu\eqDef\max_{\i \in\I\esnu} \dset_\i\esnu \longrightarrow 0$ where $\dset_\i$ is the Hausdorff distance (denoted by $d_H$) between nonatomic feasibility sets and the scaled atomic feasibility set:
\begin{equation} \label{eq:def_dset}
\dset_\i\esnu \eqDef \max_{\th \in \Theta_\i} d_H\left( \X_\th, \txt\frac{1}{\mu_i\esnu}\X_{\i}\esnu \right),
\end{equation}
\item $\mduti\esnu\eqDef\max_{\i\in\I\esnu}\duti_\i\esnu \longrightarrow 0  $ where $\duti_\i$ is the $L_\infty$-distance (in $\mathcal{B}_0(\diamX) \rightarrow \rit$) between the gradient of nonatomic utility functions and the scaled atomic utility functions:
\begin{equation}\label{eq:def_dut}
\duti\esnu_\i= \max_{\th \in \Theta_\i} \max_{\xx\in\B_0(M)} \norm{ \nabla u\esnu_\i\left(\txt {\mu_i\esnu } \xx \right) - \nabla u_\th(\xx)}_2 \ .
\end{equation} 

\end{itemize} 
\end{definition}

From \Cref{def:approx_seq} it is not trivial to build an AAS of a given nonatomic game $G$, one can even  be unsure that such a sequence exists. However, we will give practical examples in \Cref{subsec:approx_mesh,subsec:approx_uniform}.

A direct result from the assumptions in \Cref{def:approx_seq} is that the players become infinitesimal, as stated in \Cref{lem:norm_xat_approx}.
\begin{lemma}\label{lem:norm_xat_approx}
If $(\G\esnu)_\nu$ is an AAS of a nonatomic instance $\Gna$, then considering the maximal diameter $\diamX$ of $\X_\th$, we have:
\begin{equation}
\forall \i \in \I\esnu, \forall \xx_i\in\X_\i\esnu , \ \ \norm{\xx_i}_2 \leq \mu_i\esnu (\diamX+\dset_i\esnu) \ . 
\end{equation}
\end{lemma}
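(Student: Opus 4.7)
The plan is to unpack the definition of the Atomic Approximating Sequence and apply the triangle inequality in one line. The key observation is that in the AAS definition, the Hausdorff distance $\delta_i^{(\nu)}$ is computed between the nonatomic feasibility sets $\X_\th$ (for $\th$ in the cell $\Theta_i^{(\nu)}$) and the \emph{rescaled} atomic set $\tfrac{1}{\mu_i^{(\nu)}}\X_i^{(\nu)}$, so the natural move is to pass any $\xx_i\in \X_i^{(\nu)}$ through the same rescaling and control its norm via a nonatomic neighbour.

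Concretely, I would fix $\i\in\I^{(\nu)}$ and $\xx_i\in\X_i^{(\nu)}$, and let $\yy\eqDef \xx_i/\mu_i^{(\nu)} \in \tfrac{1}{\mu_i^{(\nu)}}\X_i^{(\nu)}$. Pick any $\th\in\Theta_i^{(\nu)}$ (the cell is non-empty in any meaningful partition, which is implicit since $\delta_i^{(\nu)}$ is defined as a maximum over $\th\in\Theta_i^{(\nu)}$). By the definition of the Hausdorff distance and of $\delta_i^{(\nu)}$ in \eqref{eq:def_dset}, there exists $\xx_\th\in \X_\th$ with $\norm{\yy-\xx_\th}_2 \leq \delta_i^{(\nu)}$. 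Combining with $\norm{\xx_\th}_2\leq M$ from \Cref{ass_X_nonat}, the triangle inequality yields $\norm{\yy}_2 \leq M + \delta_i^{(\nu)}$, and multiplying by $\mu_i^{(\nu)}$ gives
\begin{equation*}
\norm{\xx_i}_2 = \mu_i^{(\nu)}\,\norm{\yy}_2 \leq \mu_i^{(\nu)}(M+\delta_i^{(\nu)}),
\end{equation*}
as claimed.

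There is essentially no hard step here; the statement is a direct consequence of the definitions. The only point that requires a brief mention is why one is allowed to pick a witness $\th\in\Theta_i^{(\nu)}$: this is justified by the very formulation of \eqref{eq:def_dset}, since otherwise the maximum defining $\delta_i^{(\nu)}$ would be over an empty set and the bound would be vacuous (and in practice, the cells of the partition have positive measure $\mu_i^{(\nu)}>0$). The argument also makes clear why the estimate is sharp in spirit: each atomic player's action is a rescaled version of a nonatomic profile up to an error of order $\delta_i^{(\nu)}$, and so its norm inherits the bound $M$ with an additive distortion $\delta_i^{(\nu)}$, the whole being rescaled back by $\mu_i^{(\nu)}$.
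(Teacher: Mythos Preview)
Your proof is correct and follows essentially the same route as the paper: rescale $\xx_i$ by $1/\mu_i^{(\nu)}$, use the Hausdorff bound $\delta_i^{(\nu)}$ to find a nearby point in $\X_\th$ (the paper simply takes the projection $P_{\X_\th}(\xx_i/\mu_i^{(\nu)})$), and apply the triangle inequality together with $\X_\th\subset\mathcal{B}_0(M)$.
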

\begin{proof}
\ifproofs
Let $\xx_\i \in \X_\i\esnu$. Let $\th \in \Theta_\i\esnu$ and denote by $P_{\X_\th}$  the projection on $\X_\th$.
By definition of $\dset_\i\esnu$, we get:
\begin{align}
& \norm{\frac{\xx_i}{\mu_\i\esnu} -P_{\X_\th}\Big(\frac{\xx_i}{\mu_\i\esnu}\Big)}_2 \leq  \dset_\i\esnu \\
\Longleftrightarrow \ & \norm{\xx_i}_2 \leq \mu_i\esnu \left( \dset_\i\esnu  + \norm{P_{\X_\th}\Big(\frac{\xx_i}{\mu_\i\esnu}\Big)}_2 \right) \leq \mu_i\esnu (\dset_\i\esnu  + M) \ . 
\end{align}
\else
Apply the triangle inequality to $\norm{\frac{\xx_i}{\mu_\i\esnu} -P_{\X_\th}\Big(\frac{\xx_i}{\mu_\i\esnu}\Big)}$ where $P_{\X_\th}$ is the projection on $\X_\th$ for a $ \th \in \Theta_i$. 
\fi
\end{proof}
\begin{lemma}\label{lem:hausdorff_agg_sets}
If $(\G\esnu)_\nu$ is an AAS of a nonatomic instance $\Gna$, then the Hausdorff distance between the aggregated sets $\Sxag=\int_\Theta \X_.$ and $\Sxag\esnu=\sum_{\i\in\I\esnu} \X_\i\esnu$ is bounded by:
\begin{equation}
d_H\left( \Sxag\esnu,\Sxag \right) \leq \mdset\esnu \ .
\end{equation}
\end{lemma}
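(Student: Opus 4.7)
The plan is to bound the two one-sided distances defining $d_H(\Sxag\esnu, \Sxag)$ separately, each by $\mdset\esnu$, using the identity $\sum_{\i \in \I\esnu} \mu_\i\esnu = \mu(\Theta) = 1$ to convert a $\mu$-weighted sum of cell-wise errors into the uniform bound $\mdset\esnu = \max_\i \dset_\i\esnu$.

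For the direction $\Sxag\esnu \to \Sxag$, I would start from an arbitrary $\xxag \in \Sxag\esnu$, decomposed as $\xxag = \sum_\i \xx_\i$ with $\xx_\i \in \X_\i\esnu$. For each $\i$ and each $\th \in \Theta_\i\esnu$, I project the rescaled atomic load $\tfrac{1}{\mu_\i\esnu}\xx_\i \in \tfrac{1}{\mu_\i\esnu}\X_\i\esnu$ onto $\X_\th$, obtaining $\yy_\th \in \X_\th$ with $\norm{\tfrac{1}{\mu_\i\esnu}\xx_\i - \yy_\th}_2 \leq \dset_\i\esnu$ by the definition~\eqref{eq:def_dset} of $\dset_\i\esnu$ (specifically its $\sup_{\yy \in \tfrac{1}{\mu_\i\esnu}\X_\i\esnu} d(\yy,\X_\th)$ side). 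The point $\yyag \eqDef \int_\Theta \yy_\th\, \dth$ then lies in $\Sxag$ and
\begin{equation*}
\norm{\xxag - \yyag}_2 = \Big\|\sum_\i \int_{\Theta_\i\esnu} \big(\tfrac{1}{\mu_\i\esnu}\xx_\i - \yy_\th\big)\, \dth\Big\|_2 \leq \sum_\i \mu_\i\esnu\, \dset_\i\esnu \leq \mdset\esnu .
\end{equation*}

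For the reverse direction $\Sxag \to \Sxag\esnu$, I take $\yyag = \int_\Theta \yy_\th\, \dth \in \Sxag$ with a measurable selection $\yy_\th \in \X_\th$, and project each $\yy_\th$ (for $\th \in \Theta_\i\esnu$) onto the convex set $\tfrac{1}{\mu_\i\esnu}\X_\i\esnu$, producing $\bm{z}_\th$ with $\norm{\yy_\th - \bm{z}_\th}_2 \leq \dset_\i\esnu$ (using the other side of \eqref{eq:def_dset}). Since every such $\bm{z}_\th$ lies in the same convex set $\tfrac{1}{\mu_\i\esnu}\X_\i\esnu$, so does its average $\tfrac{1}{\mu_\i\esnu}\int_{\Theta_\i\esnu}\bm{z}_\th\,\dth$, whence $\int_{\Theta_\i\esnu}\bm{z}_\th\,\dth \in \X_\i\esnu$; summing over $\i$ yields a point of $\Sxag\esnu$ within $\mdset\esnu$ of $\yyag$ by the same integral estimate.

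The main technical obstacle is the measurability of the two selections $\th \mapsto \yy_\th$ and $\th \mapsto \bm{z}_\th$ required for the Aumann/Bochner integrals above to be well defined. This is precisely what \Cref{ass_X_nonat} (measurable graph of $\X_\cdot$) is designed for: combined with a standard measurable selection result such as Castaing's representation, it guarantees that the projection of a measurable point-valued map onto a closed-convex measurable-graph multifunction is itself measurable. For $\bm{z}_\th$ the target set is even constant on each cell $\Theta_\i\esnu$, so measurability is automatic there; only the first selection really uses the assumption.
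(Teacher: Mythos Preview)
Your proof is correct and follows the same projection strategy as the paper: for $\Sxag\esnu \to \Sxag$ it is identical (project $\tfrac{1}{\mu_\i\esnu}\xx_\i$ onto each $\X_\th$, integrate, sum). The only difference is in the $\Sxag \to \Sxag\esnu$ direction: the paper projects the cell integral $\int_{\Theta_\i\esnu}\xx_\th\,\dth$ directly onto $\X_\i\esnu$, whereas you project each $\yy_\th$ pointwise onto $\tfrac{1}{\mu_\i\esnu}\X_\i\esnu$ and then average, invoking convexity of $\X_\i\esnu$ to guarantee the average lands in the set. Your variant makes the pointwise bound $\norm{\yy_\th-\bm{z}_\th}\le\dset_\i\esnu$ available before integrating, which is cleaner, and your attention to measurability of the selections (via \Cref{ass_X_nonat}) is more explicit than what the paper records.
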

\begin{proof}
\ifproofs
Let $(\xx_\th)_\th \in \X$ be a nonatomic profile. Let $P_i$ denote the Euclidean projection on $\X_\i\esnu$ for $\i\in\I\esnu$ and consider $\yy_i\eqDef P_i\left(\int_{\Theta_i\esnu}  \xx_\th \dth \right)  \in \X_\i\esnu$. From \eqref{eq:def_dset} we have:
\begin{align}
\norm{ \int_{\Theta} \xx_\th \dth - \sum_{i\in\I\esnu}\yy_i}_2 & =\norm{ \sum_{i\in\I\esnu} \left( \int_{\Theta_\i\esnu} \xx_\th \dth - \yy_i \right) }_2  \\
& =\norm{ \sum_{i\in\I\esnu}  \int_{\Theta_\i\esnu} \left( \xx_\th - \frac{1}{\mu_\i\esnu}\yy_i \right) \dth  }_2  \\
& \leq \sum_{i\in\I\esnu} \int_{\Theta_\i\esnu} \norm{ \xx_\th - \frac{1}{\mu_\i\esnu}\yy_i }_2 \dth \\
& \leq \sum_{i\in\I\esnu} \int_{\Theta_\i\esnu}  \dset_\i\esnu \dth = \sum_{i\in\I\esnu} \mu_\i\esnu \dset_\i\esnu \leq \mdset\esnu \ ,
\end{align}
which shows that $d \left( \xxag, \Sxag\esnu \right) \leq \mdset\esnu$ for all $\xxag\in\ \Sxag$. On the other hand, if $\sum_{\i\in\I\esnu} \xx_i \in \Sxag\esnu$, then let us denote by $\Pi_\th$ the  Euclidean projection on $\X_\th$ for $\th \in \Theta$, and $\yy_\th= \Pi_\th\left(\frac{1}{\mu_\i\esnu} \xx_i \right) \in \X_\th$ for $\th \in \Theta_i\esnu$. Then we have for all $\th\in\Theta_\i\esnu$, $\norm{\frac{1}{\mu_\i\esnu}\xx_i - \yy_\th }_2 \leq \dset_\i\esnu $ and we get:
\begin{align}
\norm{\sum_{\i\in\I\esnu} \xx_i  - \int_{\Theta } \yy_\th \dth }_2 & \leq \sum_{\i\in\I\esnu}  \norm{\int_{\Theta_\i\esnu}\frac{1}{\mu_\i\esnu}  \xx_i  - \yy_\th \dth   }_2 \\ 
& \leq  \sum_{\i\in\I\esnu}  \int_{\Theta_\i\esnu} \norm{ \frac{1}{\mu_\i\esnu}  \xx_i  - \yy_\th }_2 \dth  \\
& \leq  \sum_{i\in\I\esnu} \mu_\i\esnu \dset_\i\esnu \leq \mdset\esnu \ ,
\end{align}
which shows that $d \left( \xxag, \Sxag\right) \leq \mdset\esnu$ for all $\xxag\in\ \Sxag\esnu$ and concludes the proof.
\else 
For a nonatomic profile $(\xx_\th)_\th$, consider the projection on $\X_\i\esnu$ of $\int_{\Theta_i} \xx_\th \mathrm{d}\th$. For an atomic profile $(\xx_\i)_{\i\in\I\esnu}$, consider $\xx_\th= \frac{1}{\mu_\i\esnu}\xx_\i$ if $\th\in\Theta_\i$. Details are omitted.
\fi
\end{proof}
To ensure the convergence of an AAS, we make the following additional assumptions on costs functions $(c_\t)_t$:
\begin{assumption} \textbf{Lipschitz continuous costs}\label{assp_lip}
For each $t\in \T$, $c_\t$ is a Lipschitz continuous function on $[0,M]$. There exists $C>0$ such that for each $t\in\T$, $|{c_\t}'(\cdot)|\leq C$.
\end{assumption}


\begin{assumption} \textbf{Strong monotonicity}\label{assp_strongmon}
There exists $\stmon>0$ such that, for each $t\in\{1,\ldots,T\}$, ${c_\t}'(\cdot)\geq \stmon$ on $[0,M]$
\end{assumption}

In the following sections, we differentiate the cases with and without utilities, because we found different convergence results in the two cases.

\subsection{Players without Utility Functions: Convergence of the Aggregated Equilibrium Profiles}\label{sec:atomic_without_u}
In this section, we assume that $u_\th\equiv 0$ for each $\th \in \Theta$.

We give a first result on the approximation of WE by a sequence of NE in \Cref{thm:converge_without_u}.

\begin{theorem}\label{thm:converge_without_u} Let $(\G\esnu)_\snu$ be an AAS of a nonatomic instance $\Gna$, satisfying \Cref{assp_compactness,assp_convex_costs,ass_X_nonat,assp_lip,assp_strongmon}. Let $(\hxx\esnu)$ a sequence of $\NE$ associated to $(\G\esnu)$, and $(\xx^*_\th)_\th$ a WE of $\Gna$. Then:
\begin{equation*} 
\norm{ \hxxag\esnu- \xxag^* }_2^2 \leq \frac{2}{\stmon} \times \left(\Bc \times \mdset\esnu + C(M+1)^2\times \mmu\esnu \right)\ , 
\end{equation*}
where $\Bc\eqDef\max_{\xx\in\B_0(M)} \norm{\cc(\xx)}_2$.
\end{theorem}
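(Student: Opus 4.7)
The plan is to exploit the strong monotonicity of $\cc$ from \Cref{assp_strongmon}: since $\langle\cc(\xxag) - \cc(\yyag), \xxag-\yyag\rangle = \sum_t (c_\t(\xag_t)-c_\t(\yag_t))(\xag_t - \yag_t) \geq \stmon \|\xxag-\yyag\|_2^2$, we get
\begin{equation*}
\stmon\,\|\hxxag\esnu - \xxag^*\|_2^2 \;\leq\; \langle \cc(\hxxag\esnu) - \cc(\xxag^*),\, \hxxag\esnu - \xxag^*\rangle.
\end{equation*}
It therefore suffices to upper-bound the right-hand side. I will split it as $\langle \cc(\hxxag\esnu), \hxxag\esnu - \xxag^*\rangle + \langle \cc(\xxag^*), \xxag^* - \hxxag\esnu\rangle$ and handle each piece using the relevant variational inequality evaluated at a probe supplied by \Cref{lem:hausdorff_agg_sets}.

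For the Wardrop side, pick $\yyag\in\Sxag$ with $\|\yyag-\hxxag\esnu\|_2\leq\mdset\esnu$; then \Cref{cor:agg_wardrop_wou} at $\yyag$, together with Cauchy--Schwarz and $\|\cc(\xxag^*)\|_2\leq\Bc$, yields $\langle\cc(\xxag^*),\xxag^*-\hxxag\esnu\rangle \leq \Bc\,\mdset\esnu$. For the atomic side, pick $\yy\in\X\esnu$ with $\|\sum_\i\yy_\i - \xxag^*\|_2\leq\mdset\esnu$; plugging $\yy$ into the aggregated VI of \Cref{prop:GNNash} and expanding $\nabla_\i f_\i$ (with $u\esnu_\i\equiv 0$) gives
\begin{equation*}
\langle \cc(\hxxag\esnu), \txt\sum_\i \yy_\i - \hxxag\esnu\rangle \;+\; \sum_{t\in\T} c_\t'(\hxag\esnu_t) \sum_{\i\in\I\esnu} \hx\iti\esnu\,(y\iti - \hx\iti\esnu) \;\geq\; 0.
\end{equation*}

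The delicate ingredient, and the main obstacle, is the second summand: it is the purely atomic correction absent from the Wardrop VI, and the whole argument rests on showing it is $O(\mmu\esnu)$ rather than $O(1)$. \Cref{lem:norm_xat_approx} gives $\|\hxx\esnu_\i\|_2, \|\yy_\i\|_2 \leq \mu_\i\esnu(M+\mdset\esnu)$, so by Cauchy--Schwarz in $t$, the triangle inequality on $\yy_\i - \hxx\esnu_\i$, and $|c_\t'|\leq C$, the double sum is bounded in absolute value by $2C(M+\mdset\esnu)^2 \sum_\i(\mu_\i\esnu)^2$. The key observation is that since $(\Theta_\i\esnu)_\i$ partitions $[0,1]$, $\sum_\i(\mu_\i\esnu)^2 \leq \mmu\esnu \sum_\i\mu_\i\esnu = \mmu\esnu$, which collapses the correction to $O(\mmu\esnu)$. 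The first summand is then handled by writing $\sum_\i\yy_\i = \xxag^* + (\sum_\i\yy_\i - \xxag^*)$ and using $\|\cc(\hxxag\esnu)\|_2\leq\Bc$, yielding $\langle\cc(\hxxag\esnu),\hxxag\esnu-\xxag^*\rangle \leq \Bc\mdset\esnu + 2C(M+\mdset\esnu)^2\mmu\esnu$. Adding the two one-sided bounds, dividing by $\stmon$, and absorbing $\mdset\esnu\leq 1$ into $(M+\mdset\esnu)^2\leq(M+1)^2$ (for $\nu$ large enough) gives exactly the announced estimate.
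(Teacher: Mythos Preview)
Your proof is correct and follows essentially the same route as the paper's: start from the strong monotonicity of $\cc$, feed nearby feasible probes into the two variational inequalities (\Cref{cor:agg_wardrop_wou} and \Cref{prop:GNNash}), and control the extra ``atomic'' term $\sum_\i\langle(\hx\iti c_\t'(\hxag_\t))_\t,\yy_\i-\hxx_\i\rangle$ via \Cref{lem:norm_xat_approx} together with $\sum_\i(\mu_\i\esnu)^2\leq\mmu\esnu$. The only cosmetic difference is that the paper builds the probes by explicit projections ($P_\i(\int_{\Theta_\i}\xx^*_\th\,\mathrm d\th)$ and $\Pi(\hxxag)$) rather than invoking \Cref{lem:hausdorff_agg_sets} as a black box; the resulting estimates are identical.
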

\ifproofs
\begin{proof}
Let $P_i$ denote the Euclidean projection onto $\X_\i\esnu$ and $\Pi$ the projection onto $\Sxag$. We omit the index $\nu$ for simplicity. From \eqref{cond:agg_eq_wou}, we get:
\begin{equation} \label{eq:VI_WEwithNE}
\left\langle \cc(\xxag^*), \Pi(\hxxag)-\xxag^* \right\rangle \geq 0 \ .
\end{equation}
On the other hand, with $\xx_\i^*\eqDef\int_{\Theta_i} \xx_\th \mathrm{d}\th$, we get from \eqref{cond:ind_opt_Nbis}:
\begin{align} \label{eq:VI_NEwithWE}
0 \leq &\sum_{\i\in \I}  \big\langle \big(c_\t(\hxag_\t)+\hx\iti {c_\t}'(\hxag_\t)\big)_{t\in\T} , P_i(\xx^*_\i)- \hxx_\i\big\rangle \\
=&\left\langle \cc(\hxxag),  \txt\sum_i P_i(\xx^*_\i)- \hxxag \right\rangle + R(\hxx,\xx^*) 
\end{align}
with  $R(\hxx,\xx^*) =\sum_i  \big\langle \big(\hx\iti {c_\t}'(\hxag_\t)\big)_{t} , P_i(\xx^*_\i)- \hxx_\i\big\rangle $.
From the Cauchy-Schwartz inequality and  \Cref{lem:norm_xat_approx}, we get:
\begin{align}
|R(\hxx,\xx^*) |& \leq \sum_{\i\in\I\esnu} \norm{ \big(\hx\iti {c_\t}'(\hxag_\t)\big)_{t} }_2 \times  \norm{P_i(\xx^*_\i)- \hxx_\i  }_2 \\
& \leq \sum_{\i\in\I\esnu} ( \mu_i\esnu (\diamX+\dset_i\esnu) C\times 2 ( \mu_i\esnu (\diamX+\dset_i\esnu) ) \\  
&\leq 2C(M+1)^2 \max_i \mu_\i\esnu \ .
\end{align}
Besides, with the strong monotonicity of $\cc$ and from \eqref{eq:VI_WEwithNE} and \eqref{eq:VI_NEwithWE}:
\begin{align*}
&\stmon\norm{\hxxag-\xxag^* }^2 \leq \left\langle \cc(\hxxag)-\cc(\xxag^*), \hxxag-\xxag^* \right\rangle \\
&=\big\langle \cc(\hxxag), \hxxag-\xxag^* \big\rangle  +\big\langle \cc(\xxag^*), \xxag^*-\hxxag \big\rangle \\
\nonumber&\leq \big\langle \cc(\hxxag), \hxxag- \sum_i P_i(\xx^*_\i)\big\rangle  +\big\langle \cc(\xxag^*), \xxag^*-\Pi(\hxxag) \big\rangle   \\ 
& +  \big\langle \cc(\hxxag), \sum_i P_i(\xx^*_\i)-\xxag^*\big\rangle  +\big\langle \cc(\xxag^*), \Pi(\hxxag)-\hxxag  \big\rangle \\
&\leq |R(\hxx,\xx^*) | + 0 + 2\Bc \times \max_i \dset_i \ ,
\end{align*}
which concludes the proof.
\end{proof}
\else 
\begin{proof}
Consider the projections of the profile $\hxxag$ onto $\Sxag$ and conversely  of $ \int_{\Theta_i\esnu}\xx_\th \mathrm{d}\th $ onto $\X_\i\esnu$. Apply \eqref{cond:agg_eq} and \eqref{cond:agg_eq_wou} and use \Cref{lem:norm_xat_approx,lem:hausdorff_agg_sets}. Details are omitted.
\end{proof}
\fi

\subsection{Players with Utility Functions: Convergence of the Individual Equilibrium Profiles}\label{sec:atomic_with_u}

%
%
\newcommand{\liput}{L}
In order to establish a convergence theorem in the presence of utility functions, we make an additional assumption of strong monotonicity on the utility functions stated in \Cref{assp7_stgconcav_util}.
Note that this assumption 
 holds for the  utility functions given in \Cref{ex:utility}.

\newcommand{\stgccvut}{{\alpha}} 

\begin{assumption} \textbf{Strongly concave utilities}\label{assp7_stgconcav_util}
For all $\th\in \Theta$, $u_\th$ is strongly concave on $\B_0(\diamX)$, uniformly in $\th$: there exists $\stgccvut>0$ such that for all  $\xx,\yy\in \B_0(\diamX)^2$ and any $ \tau \in ]0,1[$ :
\vspace{-0.2cm}
\begin{equation*}
\hspace{-4pt} u_\th((1-\tau)\xx+\tau \yy)\geq(1-\tau) u_\th(\xx) + \tau u(\yy) + \textstyle\frac{\stgccvut}{2} \tau (1-\tau) \|\xx-\yy\|^2.
\end{equation*}
\end{assumption}


\begin{remark}
If $u_\th( \xx_\th)$ is $\alpha_\th$-strongly concave, then the negative of its gradient is a strongly monotone operator: 
\begin{equation}\label{eq:minusVmono}
 - \langle \nabla u_\th( \xx_\th) - \nabla u_\th ( \yy_\th), \xx_\th -\yy_\th \rangle \geq \alpha_\th\|\xx_\th - \yy_\th \|^2  \ .
\end{equation}  
\end{remark}

We start by showing that, under the additional \Cref{assp7_stgconcav_util} on the utility functions, the WE profiles of two nonatomic users within the same subset $\Theta_i\esnu$ are roughly the same.

\newcommand{\distinx}{r}

\begin{proposition} \label{lem:dist_profile_inside_part}
Let $(\G\esnu)_\snu$ be an AAS of a nonatomic instance $\Gna$ and  $(\xx^*_\th)_\th$ the WE of $\Gna$ satisfying \Cref{ass_ut_nonat,assp_convex_costs,ass_X_nonat,assp7_stgconcav_util}. Then, if $\th, \xi \in \Theta_\i\esnu$, we have:
\begin{equation*} 
\hspace{-0.2cm}\norm{\xx^*_\th-\xx^*_\xi}_2^2 \leq \frac{2}{\stgccvut} \left(\diamX\duti_\i\esnu  + (\Bc+\Bcut) \dset_i\esnu \right) .
\end{equation*}

\end{proposition}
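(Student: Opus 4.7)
The plan is to apply the variational characterization of the WE (\Cref{prop:cond_nash_wardrop}) at both $\th$ and $\xi$ simultaneously, and then to exploit the strong concavity of the utilities (\Cref{assp7_stgconcav_util}) together with the proximity of the local data of the two players, encoded by $\dset_\i\esnu$ and $\duti_\i\esnu$.

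First I would instantiate \eqref{cond:ind_opt} at $\th$ with the feasible test vector $\yy_\th \eqDef P_{\X_\th}(\xx^*_\xi) \in \X_\th$, the Euclidean projection of $\xx^*_\xi$ onto $\X_\th$, and symmetrically at $\xi$ with $\yy_\xi \eqDef P_{\X_\xi}(\xx^*_\th)$. Since $\th$ and $\xi$ both lie in $\Theta_\i\esnu$, the triangle inequality for Hausdorff distance through the scaled atomic set $\tfrac{1}{\mu_\i\esnu}\X_\i\esnu$ gives $\|\yy_\th-\xx^*_\xi\|_2\le 2\dset_\i\esnu$ and $\|\yy_\xi-\xx^*_\th\|_2\le 2\dset_\i\esnu$. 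I would then add the two variational inequalities and, in each displacement, split $\yy_\th - \xx^*_\th = (\yy_\th - \xx^*_\xi) + (\xx^*_\xi - \xx^*_\th)$ (and symmetrically for $\yy_\xi - \xx^*_\xi$). The ``short'' pieces $\yy_\th - \xx^*_\xi$ and $\yy_\xi - \xx^*_\th$, paired with $\cc(\xxag^*)$ and with $\nabla u_\th(\xx^*_\th),\nabla u_\xi(\xx^*_\xi)$, yield an error of order $(\Bc+\Bcut)\dset_\i\esnu$ via Cauchy--Schwarz, using the bound $\|\cc\|_2\le\Bc$ on $\B_0(\diamX)$ and the gradient bound $\|\nabla u_\cdot\|\le\Bcut$ from \Cref{ass_ut_nonat}.

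The ``long'' pieces collapse to the single inner product $\langle \nabla u_\th(\xx^*_\th) - \nabla u_\xi(\xx^*_\xi),\,\xx^*_\th - \xx^*_\xi\rangle$. Adding and subtracting $\nabla u_\th(\xx^*_\xi)$ splits this into two parts: (i) $\langle \nabla u_\th(\xx^*_\th) - \nabla u_\th(\xx^*_\xi),\,\xx^*_\th - \xx^*_\xi\rangle \le -\stgccvut\|\xx^*_\th - \xx^*_\xi\|_2^2$ by strong concavity of $u_\th$ (Remark \eqref{eq:minusVmono}), which supplies the desired quadratic term on the left-hand side; and (ii) $\langle \nabla u_\th(\xx^*_\xi) - \nabla u_\xi(\xx^*_\xi),\,\xx^*_\th - \xx^*_\xi\rangle$, bounded via Cauchy--Schwarz by $\|\nabla u_\th(\xx^*_\xi) - \nabla u_\xi(\xx^*_\xi)\|_2\le 2\duti_\i\esnu$ (definition \eqref{eq:def_dut} applied twice, triangulating through $\nabla u_\i\esnu$) and $\|\xx^*_\th - \xx^*_\xi\|_2\le 2\diamX$ from \Cref{ass_X_nonat}. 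This contributes the $\diamX\duti_\i\esnu$ part of the bound. Rearranging and dividing by $\stgccvut$ produces the announced inequality.

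The main obstacle I anticipate is the numerical bookkeeping required to land on the announced prefactor $2/\stgccvut$: both $\dset_\i\esnu$ and $\duti_\i\esnu$ measure distances from the nonatomic data to the \emph{scaled atomic} data, so each comparison between two nonatomic players must go through a triangle inequality and picks up a factor of two. Collecting the four error contributions while using the tightest estimate at each step---in particular the appropriate upper bound on $\|\xx^*_\th-\xx^*_\xi\|_2$---is what requires the most care; the structural argument itself is essentially forced by the combination of the two VIs and the strong monotonicity of $-\nabla u_\th$.
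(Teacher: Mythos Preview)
Your proposal is correct and follows essentially the same route as the paper: instantiate the WE variational inequality at $\th$ and at $\xi$ with the cross-projections, add, use strong concavity of $u_\th$ after inserting and removing $\nabla u_\th(\xx^*_\xi)$, and control the residuals via Cauchy--Schwarz with the bounds $\Bc,\Bcut,\diamX$. Your caution about the factor-of-two bookkeeping (triangulating through the scaled atomic data for both $\dset_\i\esnu$ and $\duti_\i\esnu$) is well placed---the paper's own proof is in fact loose on exactly this point, writing $\dset_\i\esnu$ and $\duti_\i\esnu$ where $2\dset_\i\esnu$ and $2\duti_\i\esnu$ are what the definitions strictly give.
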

\ifproofs

\begin{proof}
Let $\yy_\th=P_{\X_\xi}(\xx^*_\th)$ and conversely $\yy_\xi=P_{\X_\th}(\xx^*_\xi)$.
Then from \Cref{prop:cond_nash_wardrop} we get:
 \begin{align}
 &\langle \cc(\xxag^*) - \nabla u_\th( \xx^*_\th), \yy_\xi +(\xx^*_\xi-\xx^*_\xi)- \xx^*_\th \rangle \geq 0 \label{eq:3B} \\
 & \langle \cc(\xxag^*) - \nabla u_\xi(\xx^*_\xi), \yy_\th +(\xx^*_\th-\xx^*_\th)- \xx^*_\xi \rangle \geq 0 \label{eq:4B} .  \end{align}
Denote by $v_\th=\nabla u_\th$ and $v_\xi=\nabla u_\xi$. Then, using the strong concavity \Cref{assp7_stgconcav_util} of $u_\th$, we get:
\begin{align}
\stgccvut &\norm{\xx^*_\th      -\xx^*_\xi}^2 \leq \langle v_\th(\xx^*_\th)-v_\th(\xx^*_\xi), \xx^*_\xi-\xx^*_\th \rangle \\
  & \leq \langle v_\th(\xx^*_\th)+v_\xi(\xx^*_\xi)-v_\xi(\xx^*_\xi)-v_\th(\xx^*_\xi), \xx^*_\xi-\xx^*_\th \rangle 
 \\ \label{eq:dist_tutil_boundproof}
& \leq \duti\esnu_i\norm{\xx^*_\xi-\xx^*_\th} +\langle v_\th(\xx^*_\th) - v_\xi(\xx^*_\xi),\xx^*_\xi-\xx^*_\th  \rangle \ .
\end{align}
Then,  \eqref{eq:3B} and \eqref{eq:4B} yield:
\begin{align*}
&\langle v_\th(\xx^*_\th) - v_\xi(\xx^*_\xi),\xx^*_\xi-\xx^*_\th  \rangle \\
& = \langle v_\th(\xx^*_\th) -\cc(\xxag),\xx^*_\xi-\xx^*_\th  \rangle +\langle v_\xi(\xx^*_\xi) -\cc(\xxag) ,\xx^*_\th  - \xx^*_\xi \rangle \\
& = \langle v_\th(\xx^*_\th) -\cc(\xxag),\xx^*_\xi-\yy_\xi  \rangle +\langle v_\xi(\xx^*_\xi) -\cc(\xxag) ,\xx^*_\th  - \yy_\th \rangle \\
& \leq \dset_\i\esnu \norm{v_\th(\xx^*_\th) -\cc(\xxag)} + \dset_\i\esnu \norm{v_\xi(\xx^*_\xi) -\cc(\xxag)} \ ,
\end{align*}
which gives the desired result when combined with \eqref{eq:dist_tutil_boundproof}.
\end{proof}
\else 
\begin{proof}
Apply \Cref{prop:cond_nash_wardrop} to $\xx_\th^*$ and $\xx^*_\xi$. Details are omitted.
\end{proof} 
\fi

This result reveals the role of the strong concavity of utility functions: when $\stgccvut$ goes to $0$, the right hand side of the inequality diverges. This is coherent with the fact that, without utilities, only the aggregated profile matters, so that we cannot have a result such as \Cref{lem:dist_profile_inside_part}.

According to \Cref{lem:dist_profile_inside_part}, we can obtain a continuity property of the Wardrop equilibrium if we introduce the notion of continuity for the nonatomic game $\Gna$, relatively to its parameters:
\begin{definition} \textbf{Continuity of a nonatomic game } \\ \label{def:continuity_nonatomic_game}
The nonatomic instance $\Gna=\big(\Theta,\T,(\X_\th)_\th,\cc,(u_\th)_\th\big)$ is said to be continuous at $\th \in \Theta$ if, for all $\varepsilon>0$, there exists $\eta>0$ such that:
\begin{equation}
\forall  \th' \in\Theta, \  \|\th-\th'\| \leq \eta \ \Rightarrow \left\{ \begin{array}{l} d_H(\X_\th, \X_{\th'}) \leq \varepsilon  \\ \max_{\xx\in \X_\th \cup \X_{\th'}} \norm{\nabla u_\th(\xx) - \nabla u_{\th'}(\xx) }_2 \leq \varepsilon \end{array} \right. \ .
\end{equation}
\end{definition}

Then the proof of \Cref{lem:dist_profile_inside_part} shows the following intuitive property:
\begin{proposition}\label{prop:continuity_from_parameters}
Let $\Gna=\big(\Theta,\T,(\X_\th)_\th,\cc,(u_\th)_\th\big)$ be a nonatomic instance. If $\Gna$ is continuous at $\th_0\in \Theta$ and $(\xx^*_\th)_\th$ is a WE of $\Gna$, then $\th \mapsto \xx^*_\th$ is continuous at $\th_0$.
\end{proposition}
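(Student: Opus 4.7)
The plan is to adapt the proof of \Cref{lem:dist_profile_inside_part} in a straightforward way: that lemma bounded $\|\xx^*_\th-\xx^*_\xi\|$ in terms of the Hausdorff distance between $\X_\th$ and $\X_\xi$ and the sup-distance between $\nabla u_\th$ and $\nabla u_\xi$, and the continuity \Cref{def:continuity_nonatomic_game} is precisely the hypothesis that makes both of these quantities small when $\th'$ is close to $\th_0$.

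Concretely, fix $\varepsilon>0$ and take $\th'\in\Theta$. I would first set $\yy_{\th_0}\eqDef P_{\X_{\th'}}(\xx^*_{\th_0})$ and $\yy_{\th'}\eqDef P_{\X_{\th_0}}(\xx^*_{\th'})$ and write the two variational inequalities of \Cref{prop:cond_nash_wardrop} at $\th_0$ and $\th'$, using $\yy_{\th'}$ and $\yy_{\th_0}$ respectively as feasible test points. Next I would apply \Cref{assp7_stgconcav_util} in the form of the strong-monotonicity bound $\stgccvut\|\xx^*_{\th_0}-\xx^*_{\th'}\|^2 \leq \langle \nabla u_{\th_0}(\xx^*_{\th_0})-\nabla u_{\th_0}(\xx^*_{\th'}),\xx^*_{\th'}-\xx^*_{\th_0}\rangle$, then insert and subtract $\nabla u_{\th'}(\xx^*_{\th'})$ to separate a ``utility mismatch'' term controlled by $\|\nabla u_{\th_0}-\nabla u_{\th'}\|_\infty\cdot\|\xx^*_{\th_0}-\xx^*_{\th'}\|$ from the cross term $\langle \nabla u_{\th_0}(\xx^*_{\th_0})-\nabla u_{\th'}(\xx^*_{\th'}),\xx^*_{\th'}-\xx^*_{\th_0}\rangle$, which is handled via the two VIs exactly as in \Cref{lem:dist_profile_inside_part} and bounded by $d_H(\X_{\th_0},\X_{\th'})\,(\Bc+\Bcut)$.

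Combining these pieces yields an inequality of the same shape as in \Cref{lem:dist_profile_inside_part}, namely
\begin{equation*}
\|\xx^*_{\th_0}-\xx^*_{\th'}\|_2^2 \;\leq\; \tfrac{2}{\stgccvut}\Bigl(\diamX\cdot\|\nabla u_{\th_0}-\nabla u_{\th'}\|_\infty + (\Bc+\Bcut)\,d_H(\X_{\th_0},\X_{\th'})\Bigr).
\end{equation*}
By \Cref{def:continuity_nonatomic_game} applied with tolerance $\eta'\eqDef \min\{1,\stgccvut\varepsilon^2/(2(\diamX+\Bc+\Bcut))\}$, there exists $\eta>0$ such that $\|\th_0-\th'\|\leq \eta$ forces both $d_H(\X_{\th_0},\X_{\th'})$ and $\|\nabla u_{\th_0}-\nabla u_{\th'}\|_\infty$ to be at most $\eta'$, from which $\|\xx^*_{\th_0}-\xx^*_{\th'}\|_2\leq\varepsilon$ follows.

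I expect essentially no obstacle, since the argument is a line-by-line transcription of the proof of \Cref{lem:dist_profile_inside_part} with the partition-based quantities $\dset_i\esnu$ and $\duti_i\esnu$ replaced by the pointwise continuity moduli $d_H(\X_{\th_0},\X_{\th'})$ and $\|\nabla u_{\th_0}-\nabla u_{\th'}\|_\infty$. The only small point to be careful about is that the projections $\yy_{\th_0},\yy_{\th'}$ still lie in $\B_0(\diamX)$ (which is immediate from \Cref{ass_X_nonat}) so that \Cref{ass_ut_nonat} gives the bound $\Bcut$ on the gradient of $u$ used in controlling $\|\nabla u_\cdot(\xx^*_\cdot)-\cc(\xxag^*)\|$.
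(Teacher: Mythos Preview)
Your proposal is correct and is exactly the approach the paper intends: the paper does not give a separate proof of this proposition but simply states that ``the proof of \Cref{lem:dist_profile_inside_part} shows'' the result, and your write-up is precisely the line-by-line transcription of that proof with $\dset_\i\esnu$ and $\duti_\i\esnu$ replaced by $d_H(\X_{\th_0},\X_{\th'})$ and $\|\nabla u_{\th_0}-\nabla u_{\th'}\|_\infty$.
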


The next theorem is one of the main results of this paper. It shows that a WE can be approximated by the NE of an atomic approximating sequence.

\begin{theorem}\label{thm:converge_with_u}
 Let $(\G\esnu)_\snu$ be an AAS of a nonatomic instance $\Gna$. Let $(\hxx\esnu)$ a sequence of $\NE$ associated to $(\G\esnu)$, and $(\xx^*_\th)_\th$ the WE of $\Gna$. 
Under \Cref{assp_lip,assp_compactness,assp_concav_ut,assp7_stgconcav_util,ass_ut_nonat,assp_convex_costs,ass_X_nonat},  
  the approximating solution defined by $\hxx_\th\esnu:= \frac{1}{\mu_\i\esnu}\hxx_\i\esnu$ for $\th\in\Theta_\i\esnu$ satisfies:
  \vspace{-0.2cm}
 \begin{align*} 
&\int_{\th\in \Theta} \norm{ \hat{\xx}\esnu_\th - \xx^*_\th }^2_2 \mathrm{d}\th   \leq \frac{2}{ \stgccvut} \left(\Bc +\Bcut)\mdset\esnu + C (\diamX+ 1)^2  \mmu\esnu +\diamX\mduti\esnu \right)   .
\end{align*}
\end{theorem}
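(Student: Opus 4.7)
The plan is to adapt the argument of \Cref{thm:converge_without_u} by exploiting the strong concavity \Cref{assp7_stgconcav_util} to control individual---not merely aggregate---profiles. The starting point is the pointwise strong monotonicity of $-\nabla u_\th$ (cf.\ \eqref{eq:minusVmono}) applied at $\hxx_\th\esnu$ and $\xx_\th^*$, integrated over $\Theta$:
\[
\stgccvut \int_\Theta \norm{\hxx_\th\esnu - \xx_\th^*}^2_2 \dth \leq \int_\Theta \langle \nabla u_\th(\xx_\th^*) - \nabla u_\th(\hxx_\th\esnu),\, \hxx_\th\esnu - \xx_\th^* \rangle \dth.
\]
I will split the right-hand side into a WE-term $\int_\Theta \langle \nabla u_\th(\xx_\th^*),\, \hxx_\th\esnu - \xx_\th^* \rangle \dth$ and a NE-term $-\int_\Theta \langle \nabla u_\th(\hxx_\th\esnu),\, \hxx_\th\esnu - \xx_\th^* \rangle \dth$, and upper-bound each via the appropriate variational characterization so that the leftover cost cross-terms combine into a non-positive quantity by monotonicity of $\cc$.

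For the WE-term, note that for each $\th \in \Theta_\i\esnu$ the profile $\hxx_\th\esnu = \tfrac{1}{\mu_\i\esnu} \hxx_\i\esnu \in \tfrac{1}{\mu_\i\esnu}\X_\i\esnu$ lies within $\dset_\i\esnu$ of $\X_\th$, so $P_{\X_\th}(\hxx_\th\esnu) \in \X_\th$ is a valid test point in \eqref{cond:ind_opt}. Integrating and using the uniform bounds $\Bc$ and $\Bcut$ yields
\[
\int_\Theta \langle \nabla u_\th(\xx_\th^*), \hxx_\th\esnu - \xx_\th^* \rangle \dth \leq \int_\Theta \langle \cc(\xxag^*), \hxx_\th\esnu - \xx_\th^* \rangle \dth + (\Bc + \Bcut)\mdset\esnu.
\]
For the NE-term, symmetrically I define $\tilde{\xx}_\th^* := P_{\frac{1}{\mu_\i\esnu}\X_\i\esnu}(\xx_\th^*)$ for $\th \in \Theta_\i\esnu$ and $\xx_\i^{**} := \int_{\Theta_\i\esnu}\tilde{\xx}_\th^*\, \dth$, which lies in $\X_\i\esnu$ by convexity of $\X_\i\esnu$. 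Plugging $\xx^{**}$ into \eqref{cond:ind_opt_Nbis}, rewriting the sum as an integral (since $\int_{\Theta_\i\esnu}\hxx_\th\esnu\dth = \hxx_\i\esnu$ and $\xx_\i^{**} = \int_{\Theta_\i\esnu}\tilde{\xx}_\th^*\dth$), then replacing $\nabla u_\i\esnu(\hxx_\i\esnu)$ by $\nabla u_\th(\hxx_\th\esnu)$ at a cost of $2\diamX\mduti\esnu$ via \eqref{eq:def_dut}, and finally substituting $\tilde{\xx}_\th^*$ by $\xx_\th^*$ at a cost of $(\Bc + \Bcut)\mdset\esnu$, produces
\[
\int_\Theta \langle \nabla u_\th(\hxx_\th\esnu), \xx_\th^* - \hxx_\th\esnu \rangle \dth \leq \int_\Theta \langle \cc(\hxxag\esnu), \xx_\th^* - \hxx_\th\esnu \rangle \dth + R + 2\diamX\mduti\esnu + (\Bc + \Bcut)\mdset\esnu,
\]
where $R := \sum_\i \langle (\hx\iti\esnu {c_\t}'(\hxag_\t\esnu))_\t,\, \xx_\i^{**} - \hxx_\i\esnu \rangle$ is bounded by $2C(\diamX + 1)^2\mmu\esnu$ exactly as in \Cref{thm:converge_without_u} via \Cref{lem:norm_xat_approx}.

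Summing the two upper bounds and combining with the strong-concavity inequality, the two $\cc$-integrals fuse into $\langle \cc(\xxag^*) - \cc(\hxxag\esnu),\, \hxxag\esnu - \xxag^*\rangle$, which is $\leq 0$ by monotonicity of $\cc$ (from \Cref{assp_convex_costs}); dividing by $\stgccvut$ yields the claimed bound $\frac{2}{\stgccvut}\big((\Bc+\Bcut)\mdset\esnu + C(\diamX+1)^2\mmu\esnu + \diamX\mduti\esnu\big)$. The main obstacle I anticipate is bookkeeping: the test points must be chosen so that each projection error collapses cleanly into the $\mdset\esnu$ channel, the atomic correction $R$ is isolated cleanly into the $\mmu\esnu$ channel via \Cref{lem:norm_xat_approx}, and the atomic-to-nonatomic utility conversion yields a single $\diamX\mduti\esnu$ contribution without cross-contaminating the other error terms.
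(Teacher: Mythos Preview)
Your proposal is correct and follows essentially the same route as the paper: both arguments combine the strong concavity of $u_\th$ with the monotonicity of $\cc$, split the resulting integral into a WE-term and an NE-term, and control each via the corresponding variational inequality together with projections onto $\X_\th$ (resp.\ $\X_\i\esnu$), isolating the three error channels $\mdset\esnu$, $\mmu\esnu$, $\mduti\esnu$ exactly as you describe. The only cosmetic differences are that the paper averages $\xx^*_\th$ over $\Theta_\i$ before projecting onto $\X_\i\esnu$ (you project first, then average), and that the paper invokes strong monotonicity of $\cc$ at the outset whereas you use only monotonicity at the end---your version is in fact better aligned with the theorem's stated hypotheses, which do not include \Cref{assp_strongmon}.
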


\ifproofs
\begin{proof}
\newcommand{\hyy}{\hat{\yy}}
Let $(\hxx_i)_i$ be an NE of $\G\esnu$, and $\xx^* \in \X $ the WE of $\Gna$. For the remaining of the proof we ommit the index $(\snu)$ for simplicity. 

Let us consider  the nonatomic profile defined by $\hxx_\th \eqDef \frac{1}{\mu_\i} \hxx_i$ for $\th \in \Theta_\i$, and its projection on the feasibility set $\hyy_\th\eqDef P_{\X_\th }(\hxx_\th) $. Similarly, let us consider the atomic profile given by $\xx^*_i\eqDef \int_{\Theta_i} \xx^*_\th \dth $ for $\i\in\I\esnu$, and its projection $\yy^*_i\eqDef P_{\X_i}(\xx_\i^*) $.

For notation simplicity, we denote $\nabla u_\th $ by $v_\th$.  From the strong concavity of $v_\th$ and the strong monotonicity of $\cc$, we have:
\begin{align}
& \stgccvut  \int_{\th\in \Theta} \norm{ \hat{\xx}\esnu_\th - \xx^*_\th }^2_2 + \stmon \norm{ \hxxag\esnu- \xxag^* }_2^2 \\
& \leq \int_{\Theta} \left\langle \cc(\hxxag)-v_\th(\hxx_\th) -\left(\cc(\xxag^*)- v_\th( \xx^*_\th) \right), \ \hxx_\th - \xx^*_\th  \right \rangle \dth \\
&=\int_{\Theta} \left\langle \cc(\hxxag)-v_\th(\hxx_\th)       , \ \hxx_\th - \xx^*_\th    \right \rangle \dth  + \int_{\Theta} \left\langle  \cc(\xxag^*)- v_\th( \xx^*_\th) , \ \xx^*_\th -\hxx_\th  \right \rangle \dth \ . \label{eq:proof_profile_ut_two_ints_tobound}
\end{align}
To bound the second term, we use the characterization of a WE  given in \Cref{prop:cond_nash_wardrop}, with $\hyy_\th \in\X_\th$ :
\begin{align}
& \int_{\Theta} \left\langle \cc(\xxag^*)- v_\th( \xx^*_\th) , \ \xx^*_\th  - \hxx_\th  \right \rangle \dth \\
& =\int_{\Theta} \left\langle \cc(\xxag^*)- v_\th( \xx^*_\th) , \ \xx^*_\th  - \hyy_\th  \right \rangle \dth + \int_{\Theta} \left\langle \cc(\xxag^*)- v_\th( \xx^*_\th) , \  \hyy_\th - \hxx_\th \right \rangle \dth \\
&\leq 0 + \sum_{i\in\I\esnu} \int_{\Theta_i} \norm{ \cc(\xxag^*)- v_\th( \xx^*_\th)}_2 \times \norm{ \hyy_\th - \hxx_\th }_2 \dth \\
&\leq \sum_{i\in\I\esnu} \int_{\Theta_i} (\Bc + \Bcut) \times \dset_\i \leq  (\Bc + \Bcut) \times  \mdset \ . \label{eq:proof_ut_maj1} 
\end{align}
To bound the first term of \eqref{eq:proof_profile_ut_two_ints_tobound}, we  divide it into two integral terms:
\begin{align}
&\int_{\Theta} \left\langle \cc(\hxxag)-v_\th(\hxx_\th)       , \ \hxx_\th - \xx^*_\th    \right \rangle \dth  \\ 
&=\sum_{\i\in\I\esnu}  \left[ \int_{\Theta_i} \left\langle \cc(\hxxag)-v_\i(\hxx_\i)       , \ \hxx_\th - \xx^*_\th    \right \rangle \dth  + \int_{\Theta_i} \left\langle v_\i(\hxx_\i)- v_\th(\hxx_\th)       , \ \hxx_\th - \xx^*_\th    \right \rangle \dth   \right] \ .
\end{align}
The first integral term is bounded using the characterization of a NE given in \Cref{prop:GNNash}:\begin{align}
&\sum_{\i\in\I\esnu}  \int_{\Theta_i} \left\langle \cc(\hxxag)-v_\i(\hxx_\i)       , \ \hxx_\th - \xx^*_\th    \right \rangle \dth \\
&= \sum_{\i\in\I\esnu}  \left\langle \cc(\hxxag)-v_\i(\hxx_\i)       , \ \hxx_\i - \xx^*_\i    \right \rangle  \\
& \leq \sum_{\i\in\I\esnu}  \left\langle \cc(\hxxag)-v_\i(\hxx_\i)       , \ \hxx_\i - \yy_\i^*   \right \rangle + \sum_{\i\in\I\esnu}  \left\langle \cc(\hxxag)-v_\i(\hxx_\i)       , \ \yy_\i^*- \xx^*_\i    \right \rangle \\
& \leq  -R(\hxx,\xx^*) +  \sum_{\i\in\I\esnu} \norm{\cc(\hxxag)-v_\i(\hxx_\i) }_2 \times \norm{ \yy_\i^*- \xx^*_\i  }_2 \\
& \leq 2C(M+1)^2 \mmu +  (\Bc+ \Bcut) \times 2 M \mdset \sum_{\i\in\I\esnu} \mu_i \\
&= 2C(M+1)^2 \mmu +  (\Bc+ \Bcut) \times 2 M \mdset \ . \label{eq:proof_ut_maj2}
\end{align}
For the second integral term, we use the distance between utilities \eqref{eq:def_dut}:
\begin{align}
&\sum_{\i\in\I\esnu} \int_{\Theta_i} \left\langle v_\i(\hxx_\i)- v_\th(\hxx_\th)       , \ \hxx_\th - \xx^*_\th    \right \rangle \dth   \\
& \leq  \sum_{\i\in\I\esnu} \mu_i \norm{v_\i(\hxx_\i)- v_\th(\hxx_\th)    }_2 \times \norm{ \hxx_\th - \xx^*_\th }_2 \\
& \leq \sum_{\i\in\I\esnu} \mu_\i \duti_\i \times 2\diamX \leq \mduti 2 \diamX \ . \label{eq:proof_ut_maj3}
\end{align}
We conclude the proof by combining \eqref{eq:proof_ut_maj1},\eqref{eq:proof_ut_maj2} and  \eqref{eq:proof_ut_maj3}.
\end{proof}
\else \begin{proof}
We use the VI formulations of equilibria \eqref{cond:ind_opt_N} and  \eqref{cond:ind_opt} and the monotonicity of $\cc$ and $\nabla u_\th$. Details are omitted.
\end{proof}
\fi

As in \Cref{lem:dist_profile_inside_part}, the  uniform strong concavity of the utility functions plays a key role in the convergence of disaggregated profiles $(\hxx_\th\esnu)_\snu$ to the nonatomic WE profile $\xx^*$.

\subsection{Construction of an Approximating Sequence}

In this section, we give examples of the construction of an  AAS for a nonatomic game $\Gna$, under two particular cases: the case of piecewise continuous functions and, next, the case of finite-dimensional parameters.

\subsubsection{Piecewise continuous parameters, uniform splitting}

\label{subsec:approx_uniform}
\newcommand{\disth}{\sigma} 
\newcommand{\cutp}{\upsilon} 
In this case, we assume that the parameters of the nonatomic game are piecewise continuous functions of $\th \in \Theta$: there exists a finite set of $K$ discontinuity points $0 \leq \disth_1 <\disth_2 < \dots < \disth_K\leq 1$, and the game is uniformly continuous (\Cref{def:continuity_nonatomic_game})  on $( \sigma_k, \sigma_{k+1})$, for each $k\in  \{0 ,\dots ,K+1\}$ with the convention $\sigma_0=0$ and $\sigma_K=1$.

For $\nu \in \nit^*$, consider the ordered set of $I_\snu$ cutting points $(\cutp_i\esnu)_{i=0}^{I_\snu} :=\left\{ \frac{k}{\nu} \right\}_{0\leq k \leq \nu} \cup \{\sigma_k   \}_{1\leq k \leq K}$ and define the partition $(\Theta_\i\esnu)_{i\in \I\esnu}$ of $\Theta$ by:
\vspace{-0.15cm}
\begin{equation}
\forall \i \in \{1, \dots, I_\snu\}\ , \Theta_i\esnu= [\cutp_{i-1}\esnu,\cutp_{i}\esnu ) .
\end{equation}

\begin{proposition}\label{prop:approx_seq_continuous}
For $\snu \in \nit^*$, consider the atomic game $\G\esnu$ defined with  $\I\esnu\eqDef\{ 1  \dots  I_\snu\}$, and for each $\i\in\I\esnu$:
\vspace{-0.15cm}
\begin{equation*}
%
\X_i\esnu \eqDef \mu_i\esnu \X_{ \bar{\cutp}_{i}\esnu} 
\text{ \ and \ } u_i\esnu\eqDef \xx \mapsto \mu_i\esnu u_{ \bar{\cutp}_{i}\esnu }\Big(\txt\frac{1}{\mu_i\esnu} \xx \Big)   ,
\vspace{-0.2cm}
\end{equation*}
with $\bar{\cutp}_{i}\esnu= \frac{\cutp_{i-1}\esnu +  \cutp_{i}\esnu}{2}$.  Then $\big(\G\esnu\big)_{\snu}=\big( \I\esnu, \T, \X\esnu, \cc, u\esnu \big)_{\snu} $ is an AAS of the nonatomic game  $\Gna=\left( \Theta, \T, \X_. , \cc, (u_\th)_{\th} \right) $. 
\end{proposition}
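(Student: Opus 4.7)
The plan is to verify directly the four bullet conditions of \Cref{def:approx_seq} for this explicit construction. First, since the cutting set $(\cutp_i\esnu)_i$ contains the $\snu+1$ fractions $k/\snu$, we have $I_\snu \geq \snu$ and hence $I\esnu \to \infty$. Moreover, the uniform grid already has spacing $1/\snu$, and inserting the finitely many additional points $\disth_1,\dots,\disth_K$ only refines it, so $\mmu\esnu \leq 1/\snu \to 0$. The first two bullets are then immediate.

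The key observation behind the construction is that the scaling factor $\mu_i\esnu$ cancels out in both objects appearing in \eqref{eq:def_dset} and \eqref{eq:def_dut}: directly by definition $\frac{1}{\mu_i\esnu}\X_i\esnu = \X_{\bar{\cutp}_i\esnu}$, and the chain rule applied to $u_i\esnu(\yy) = \mu_i\esnu u_{\bar{\cutp}_i\esnu}(\yy/\mu_i\esnu)$ yields $\nabla u_i\esnu(\mu_i\esnu \xx) = \nabla u_{\bar{\cutp}_i\esnu}(\xx)$. Consequently, the third and fourth bullets reduce to showing, uniformly in $i\in\I\esnu$,
\begin{equation*}
\max_{\th \in \Theta_i\esnu} d_H\bigl(\X_\th,\X_{\bar{\cutp}_i\esnu}\bigr) \longrightarrow 0 \quad \text{and} \quad \max_{\th \in \Theta_i\esnu}\max_{\xx\in \B_0(\diamX)} \norm{\nabla u_\th(\xx)-\nabla u_{\bar{\cutp}_i\esnu}(\xx)}_2 \longrightarrow 0.
\end{equation*}

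To establish both limits simultaneously, I would invoke the piecewise continuity hypothesis. Since $\disth_1,\dots,\disth_K$ are included among the cutting points, once $\snu$ is large enough that $1/\snu < \min_k(\disth_{k+1}-\disth_k)$, each partition element $\Theta_i\esnu$ is contained in a single piece $[\disth_k,\disth_{k+1})$ on which the game is uniformly continuous in the sense of \Cref{def:continuity_nonatomic_game}. For $\varepsilon>0$, the \emph{finiteness} of the number of pieces yields a common modulus $\eta>0$ such that $|\th-\th'|\leq\eta$ (within any single piece) implies both $d_H(\X_\th,\X_{\th'})\leq\varepsilon$ and $\max_{\xx\in\B_0(\diamX)}\norm{\nabla u_\th(\xx)-\nabla u_{\th'}(\xx)}_2 \leq \varepsilon$. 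Choosing $\snu$ with $\mmu\esnu \leq 2\eta$ gives $|\th-\bar{\cutp}_i\esnu|\leq \mu_i\esnu/2\leq \eta$ for every $\th\in\Theta_i\esnu$, hence $\mdset\esnu\leq\varepsilon$ and $\mduti\esnu\leq\varepsilon$ simultaneously. The only mild subtlety is the extraction of this common modulus across the $K$ pieces; measure-zero boundary points $\disth_k$ themselves cause no problem since they are cutting endpoints and may be assigned to either neighboring interval by convention. The convexity, compactness and differentiability requirements on $\X_i\esnu$ and $u_i\esnu$ inherit from \Cref{ass_X_nonat,ass_ut_nonat} on $\X_{\bar{\cutp}_i\esnu}$ and $u_{\bar{\cutp}_i\esnu}$ via the affine rescaling.
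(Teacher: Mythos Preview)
Your proposal is correct and follows essentially the same approach as the paper: both verify the four AAS conditions by first noting $I\esnu\geq\snu$ and $\mmu\esnu\leq 1/\snu$, then observing that the scaling factor $\mu_i\esnu$ cancels so that $\tfrac{1}{\mu_i\esnu}\X_i\esnu=\X_{\bar\cutp_i\esnu}$ and $\nabla u_i\esnu(\mu_i\esnu\xx)=\nabla u_{\bar\cutp_i\esnu}(\xx)$, and finally invoking the common modulus of piecewise uniform continuity to bound $\mdset\esnu$ and $\mduti\esnu$ by $\varepsilon$ once $\mu_i\esnu$ is below $\eta$. Your write-up is in fact slightly more careful than the paper's in making explicit why a single $\eta$ works across all $K$ pieces and why each $\Theta_i\esnu$ eventually lies in a single continuity piece.
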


\begin{proof}
We have $I\esnu > \snu \longrightarrow \infty$ and for each $\i\in\I\esnu$, $\mu(\Theta_i\esnu) \leq \frac{1}{\nu} \longrightarrow 0$. The conditions on the feasibility sets and the utility functions are obtained with the piecewise uniform continuity  conditions. If we consider a common modulus of uniform continuity $\eta$ associated to an arbitrary $\varepsilon>0$, then, for $\nu$ large enough, we have, for each $\i \in <\I\esnu$, $\mu_\i\esnu< \eta $. Thus, for all $\th \in \Theta_\i\esnu$, $|\bar{\cutp}_{i}\esnu-\th|< \eta$, so that from the continuity conditions, we have:
\begin{align}
&  d_H\Big( \X_\th, \txt\frac{1}{\mu_\i\esnu}\X_\i\esnu\Big)=d_H( \X_\th, \X_{ \bar{\cutp}_{i}\esnu}) < \varepsilon \\
 \text{ and } & \max_{\xx\in\B_0(M)} \norm{ \nabla u\esnu_\i\left(\txt {\mu_i\esnu } \xx \right) - \nabla u_\th(\xx)}_2  = \norm{ \txt\frac{\mu_\i\esnu}{\mu_i\esnu}   \nabla u_{ \bar{\cutp}_{i}\esnu}\Big(\txt\frac{1}{\mu_i\esnu} \mu_\i\esnu \xx  \Big) -\nabla u_\th(\xx)  }_2 < \varepsilon \ , 
\end{align}
which concludes the proof.
\end{proof}

\subsubsection{Finite dimension, meshgrid approximation}
\label{subsec:approx_mesh}
\renewcommand{\ss}{\bm{s}}
\newcommand{\A}{\bm{A}}
\newcommand{\bb}{\bm{b}}
\newcommand{\us}{\underline{s}}
\newcommand{\os}{\overline{s}}
\newcommand{\dimp}{K} 
Consider a nonatomic routing game $\Gna=(\Theta,\X,\F)$ (\Cref{def:nonatomicGame}) satisfying the following two hypothesis:\begin{itemize}[wide]
\item The feasibility sets are $\dimp$-dimensional polytopes: there exist $\A \in \mathcal{M}_{\dimp,T}(\rit)$ and $\bb:\Theta\rightarrow\rit^\dimp$ bounded, such that for any $\th$, $\X_\th \eqDef \{\xx \in \rit^T \ ; \ \A\xx \leq \bb_\th \}$, with $\X_\th $ nonempty and bounded (as a polytope, $\X_\th$ is closed and convex).
\item  There exist a bounded function $\ss:\Theta \rightarrow \rit^q$ and a  function $u: \rit^q \times \B_0(M) \rightarrow \rit $ such that for any $\th \in \Theta, u_\th= u(\ss_\th, .)$.   Furthermore, $u$ is Lipschitz-continuous in $\ss$.
\end{itemize}

Let us define $\ub_k\eqDef  \min_\th b_{\th,k} $ and $\ob_k \eqDef \max_\th b_{\th,k}$ for $k \in \{1\dots \dimp \}$ and define similarly $\us_k,\os_k$ for $k\in \{1\dots q\}$.  

For $\nu \in \nit^*$, we consider the uniform meshgrid of $\nu^{\dimp+q}$ classes of $\prod_{k=1}^\dimp [\ub_k,\ob_k] \times \prod_{k=1}^q [\us_k,\os_k]$ which will give us a set of $I\esnu=\nu^{K+q}$ subsets. More explicitly, if we define:
\begin{equation*}
\Gamma\esnu=\{\nn=(n_k)_{k=1}^{\dimp+q} \in \nit^{\dimp+q} \,|\, n_k\in \{1,\ldots, \nu\}\}
\end{equation*}
the set of indices for the meshgrid, and with the cutting points $\ub_{k,n_k}\eqDef \ub_k+\frac{n_k}{\nu}(\ob_k-\ub_k) $ and $\us_{k,n_k}\eqDef \us_k+\frac{n_k}{\nu}(\os_k-\us_k)$ for $n_k\in \{0,\dots, \nu \}$,  we can define the subset $\Theta\esnu_{\nn}$ of $\Theta$ as:
\begin{align*}
\Theta\esnu_{\nn} \eqDef  & \Big\{\th\in \Theta \,|\forall  1\leq k \leq \dimp, b_{\th,k} \in [\ub_{k,n_k- 1},\ub_{k,n_k} [ \ ,  \forall 1\leq k \leq q ,  s_{\th, k}  \in [\us_{k,n_k-1},\us_{k,n_k}  [ \Big\}. 
\end{align*}
 
Since some of the subsets $\Theta\esnu_{\nn}$ can be of Lebesgue measure $0$, we define the set of players $\I\esnu$ as the elements $\nn$ of $\Gamma\esnu$ for which $\mu(\Theta\esnu_{\nn}) >0$. 

\begin{remark} \label{rm:approx_finitedim_infininitesimal}If there is a set of players of positive measure that have equal parameters $\bb$ and $\ss$, then the condition $\max_{\i\in\I\esnu} \mu_\i \rightarrow 0$ will not be satisfied. In that case, adding another dimension in the meshgrid by cutting $\Theta=[0,1]$ in $\nu$ uniform segments solves the problem.
\end{remark}

\begin{proposition}\label{prop:approx_seq_finitedim}
For $\snu \in \nit^*$, consider the atomic game $\G\esnu$ defined by:
\begin{align*}
&  \I\esnu\eqDef\{ \nn\in \Gamma\esnu | \mu(\Theta\esnu_{\nn})>0\} \ , \\
\text{and for each } \nn\in\I\esnu, \  
& \X_{\nn}\esnu \eqDef  \{\xx \in \rit^T  | \A\xx \leq \txt\int_{\Theta\esnu_{\nn}}\bb_{\th}\, \text{d}\th \} \ , \\
\text{ \ and \ } & u_{\nn}\esnu\eqDef  \xx \mapsto \mu_{\nn}\esnu u\left(\txt\frac{1}{\mu_{\nn}\esnu}\txt\int_{\Theta_{\nn}\esnu} \ss_{\th} \text{d}\th ,  \ \txt\frac{1}{\mu_{\nn}\esnu} \xx \right)  ,
\end{align*}
then the sequence of games $(\G\esnu)_{\snu}=\big( \I\esnu, \T, (\X\esnu_{\nn})_{\nn}, \cc, (u_{\nn}\esnu)_{\nn} \big)_{\snu} $
 is an AAS of the nonatomic game $\Gna=\left( \Theta, \T, (\X_\th)_\th , \cc, (u_\th)_{\th} \right) $. 
\end{proposition}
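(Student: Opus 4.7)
The plan is to verify the four conditions of \Cref{def:approx_seq} one at a time for the meshgrid construction, reducing each to either a convex-averaging argument on the parameters $(\bb_\th, \ss_\th)$ or a classical perturbation bound.

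First, the cardinality condition $I\esnu \to \infty$ holds because the meshgrid refines as $\nu$ grows; for $\mmu\esnu \to 0$ I would invoke the augmentation of \Cref{rm:approx_finitedim_infininitesimal}, namely subdividing $\Theta = [0,1]$ itself into $\nu$ equal segments as an additional coordinate of the meshgrid, so that no cell retains measure bounded away from zero even if a positive-measure set of players share the same $(\bb, \ss)$. With this, $\mmu\esnu = O(1/\nu)$.

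For the Hausdorff condition $\mdset\esnu \to 0$, fix $\nn \in \I\esnu$ and $\th \in \Theta_\nn\esnu$, and set $\bar{\bb}_\nn := \tfrac{1}{\mu_\nn\esnu} \int_{\Theta_\nn\esnu} \bb_{\th'} \, \mathrm{d}\th'$. By construction of the meshgrid, each coordinate $b_{\th',k}$ with $\th' \in \Theta_\nn\esnu$ lies in an interval of length $(\ob_k - \ub_k)/\nu$, so the convex average $\bar{\bb}_\nn$ belongs to the same interval, yielding $\|\bar{\bb}_\nn - \bb_\th\|_\infty \leq \max_k (\ob_k - \ub_k)/\nu$. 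A direct scaling argument identifies $\tfrac{1}{\mu_\nn\esnu}\X_\nn\esnu = \{\xx \in \rit^T : \A\xx \leq \bar{\bb}_\nn\}$, so the two sets $\X_\th$ and $\tfrac{1}{\mu_\nn\esnu}\X_\nn\esnu$ are nonempty bounded polytopes sharing the matrix $\A$ with right-hand sides within $\ell_\infty$-distance $O(1/\nu)$. Applying Hoffman's bound for systems of linear inequalities, whose Lipschitz constant $\kappa(\A)$ depends only on $\A$, then gives $d_H\bigl(\X_\th, \tfrac{1}{\mu_\nn\esnu}\X_\nn\esnu\bigr) \leq \kappa(\A)\,\|\bar{\bb}_\nn - \bb_\th\|_\infty = O(1/\nu)$, uniformly in $\th$ and $\nn$.

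For the utility-gradient condition $\mduti\esnu \to 0$, a chain-rule computation on $u_\nn\esnu(\xx) = \mu_\nn\esnu\, u\bigl(\bar{\ss}_\nn, \xx/\mu_\nn\esnu\bigr)$, with $\bar{\ss}_\nn := \tfrac{1}{\mu_\nn\esnu}\int_{\Theta_\nn\esnu} \ss_{\th'}\, \mathrm{d}\th'$, yields $\nabla u_\nn\esnu(\mu_\nn\esnu\,\xx) = \nabla_{\xx} u(\bar{\ss}_\nn, \xx)$, while $\nabla u_\th(\xx) = \nabla_{\xx} u(\ss_\th, \xx)$. The same averaging argument as for $\bb$ yields $\|\bar{\ss}_\nn - \ss_\th\| = O(1/\nu)$, so Lipschitz continuity of $\nabla_{\xx} u$ in $\ss$ (a uniform-in-$\xx$ strengthening of the stated hypothesis on $u$) gives $\mduti\esnu = O(1/\nu) \to 0$.

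The main obstacle I anticipate is the polytope Hausdorff bound: controlling $d_H$ by an $\ell_\infty$-perturbation of the right-hand side requires Hoffman's inequality, and it is essential that $\A$ is fixed across $\th$ and $\nu$, otherwise $\kappa(\A)$ could blow up. A secondary subtlety is that the proof needs Lipschitzness of $\nabla_{\xx} u$ in $\ss$, not merely of $u$, a point that should be made explicit among the hypotheses; everything else is bookkeeping around the averaging $O(1/\nu)$ estimates.
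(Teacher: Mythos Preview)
Your proposal is correct and follows the same architecture as the paper's proof: invoke \Cref{rm:approx_finitedim_infininitesimal} for $I\esnu\to\infty$ and $\mmu\esnu\to 0$, bound the Hausdorff distance of the scaled feasibility sets via a Lipschitz-in-$\bb$ estimate on polyhedra with fixed $\A$, and bound the utility-gradient distance by the chain-rule identity $\nabla u_{\nn}\esnu(\mu_{\nn}\esnu\xx)=\nabla_{\xx}u(\bar{\ss}_{\nn},\xx)$ combined with Lipschitzness in $\ss$.

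The one substantive difference is how you obtain the polytope perturbation bound. You invoke Hoffman's inequality directly, with constant $\kappa(\A)$ depending only on $\A$. The paper instead isolates this as a separate lemma (\Cref{lem:distance_perturb_polyhedra}) and proves it from scratch via a vertex-tracking argument on a compact parameter set $\mathcal{B}$, following Batson. Both routes deliver the same conclusion $d_H(\Lambda_{\bb},\Lambda_{\bb'})\leq C_0\|\bb-\bb'\|$; your citation of Hoffman is the more economical choice, while the paper's self-contained proof avoids an external reference and makes explicit the role of the boundedness of $\mathcal{B}$. Your observation that the argument actually requires Lipschitzness of $\nabla_{\xx}u$ in $\ss$ (uniformly in $\xx\in\B_0(\diamX)$), rather than merely of $u$, is accurate and is indeed a minor imprecision in the paper's stated hypotheses.
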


Before giving the proof of  \Cref{prop:approx_seq_finitedim}, we show the following nontrivial \Cref{lem:distance_perturb_polyhedra}, from which the convergence of the feasibility sets is easily derived.

%
%

\renewcommand{\L}{\Lambda}

\begin{lemma}
\label{lem:distance_perturb_polyhedra}
Given $\A \in \mathcal{M}_{\dimp,T}(\rit)$, define the parameterized polyhedra $\Lambda_{\bb} \eqDef\{\xx \in \rit^T \ ; \ \A\xx \leq \bb \}$ for $\bb$ in a bounded set $\mathcal{B}$. Assume that $\Lambda_{\bb}$ is nonempty for each $\bb\in\mathcal{B}$. Then the Hausdorff distance between polyhedra $\Lambda_b, \ b \in \mathcal{B}$,  is linearly bounded: there exists a constant $C_0>0$ such that:
\vspace{-0.15cm}
\begin{equation}
\forall \bb,\bb' \in \mathcal{B}, \  d_H(\L_{\bb}, \L_{\bb'}) \leq C_0 \|\bb-\bb'\| \ . 
\end{equation} 
\end{lemma}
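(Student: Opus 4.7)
The plan is to reduce the statement to the classical \emph{Hoffman bound} on the stability of solutions of linear inequality systems. Concretely, Hoffman's lemma provides a constant $C_0=C_0(\A)$, depending only on the matrix $\A$, such that for every $\yy\in\rit^T$ and every right-hand side $\bb$ for which $\L_\bb\neq\emptyset$,
\begin{equation*}
d(\yy,\L_\bb)\ \leq\ C_0\,\norm{(\A\yy-\bb)_+}_2,
\end{equation*}
where $(\cdot)_+$ denotes the componentwise positive part. Since the hypothesis of the lemma asserts $\L_\bb\neq\emptyset$ for every $\bb\in\mathcal{B}$, Hoffman's bound applies uniformly on $\mathcal{B}$.

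Using this, the proof then proceeds in three short steps. First, I would fix $\bb,\bb'\in\mathcal{B}$ and pick an arbitrary $\xx\in\L_{\bb'}$. By definition, $\A\xx\leq\bb'$, hence componentwise $\A\xx-\bb\leq\bb'-\bb$, and therefore $(\A\xx-\bb)_+\leq(\bb'-\bb)_+$ componentwise, which yields $\norm{(\A\xx-\bb)_+}_2\leq\norm{\bb-\bb'}_2$. Second, inserting this into Hoffman's inequality gives
\begin{equation*}
d(\xx,\L_\bb)\ \leq\ C_0\,\norm{\bb-\bb'}_2\qquad\forall\,\xx\in\L_{\bb'},
\end{equation*}
so $\sup_{\xx\in\L_{\bb'}}d(\xx,\L_\bb)\leq C_0\norm{\bb-\bb'}_2$. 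Third, exchanging the roles of $\bb$ and $\bb'$ produces the symmetric bound, and taking the max of the two yields $d_H(\L_\bb,\L_{\bb'})\leq C_0\norm{\bb-\bb'}_2$, which is the claim.

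The only non-routine ingredient is Hoffman's lemma itself; I would invoke it as a cited black box (it is standard and gives a constant depending solely on $\A$, not on $\bb,\bb'$ or on $\mathcal{B}$). The boundedness of $\mathcal{B}$ is not used in the estimate, but it is consistent with the setting of \Cref{prop:approx_seq_finitedim} where $\bb:\Theta\to\rit^{\dimp}$ is bounded and is what ensures both that $d_H(\L_\bb,\L_{\bb'})$ stays finite and that the right-hand side is meaningful in applications. The main (minor) obstacle is to be careful that Hoffman's constant is genuinely uniform across all admissible $\bb$ — this is precisely the content of the lemma, and it is the reason the bound is linear rather than merely continuous.
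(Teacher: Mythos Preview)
Your argument via Hoffman's lemma is correct and complete: Hoffman's constant depends only on $\A$, so the bound is uniform over all $\bb\in\mathcal{B}$ without further work, and your three steps are the standard way to derive the Hausdorff Lipschitz estimate from it.

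The paper, however, does \emph{not} invoke Hoffman's lemma. It gives an elementary argument (following Batson) working directly with the vertex sets $V(\bb)$ of the polytopes: first it reduces $d_H(\L_\bb,\L_{\bb'})$ to $d_H(V(\bb),V(\bb'))$ by the maximum principle for the convex function $\xx\mapsto d(\xx,\L_{\bb'})$ on a polytope; then it tracks how a vertex $\vv=A_J^{-1}\bb_0$ moves under perturbation of $\bb$, obtaining the local Lipschitz constant $\alpha=\max_{A_J\text{ invertible}}\norm{A_J^{-1}}$; finally it uses a finite subcover argument on the compact $\mathcal{B}$ to globalize the local estimates. Thus the paper's proof uses the boundedness of $\mathcal{B}$ (and implicitly the boundedness of the polyhedra, via \Cref{ass_X_nonat}, so that $V(\bb)\neq\emptyset$) in an essential way, whereas your Hoffman-based route needs neither: it works for unbounded polyhedra sharing the same recession cone and for any set $\mathcal{B}$ of right-hand sides with nonempty $\L_\bb$. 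What the paper's approach buys is self-containedness and an explicit (if slightly different) description of the constant in terms of submatrices of $\A$; what your approach buys is brevity and a sharper statement.
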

\ifproofs
\begin{proof} The proof follows   \cite{batson1987combinatorial} in several parts, but we extend the result on the compact set $\B$, and drop the irredundancy assumption made in \cite{batson1987combinatorial}.
 
\newcommand{\V}{\mathcal{V}}
For each $\bb$, we denote by $V(\bb)$ the set of vertex of the polyhedron $\Lambda_{\bb}$. Under Assumption \ref{ass_X_nonat}, $V(\bb)$ is nonempty for any $\bb\in \B$.

First, as $\L_{\bb}$ is a polyhedra, we have $\L_{\bb}=\text{conv}(V(\bb))$ where $\text{conv}(X)$ is the convex hull of a set $X$. As the function $\xx \mapsto d(\xx,\Lambda_{\bb'})$ defined over $\L_{\bb}$ is continuous and convex, by the maximum principle, its maximum over the polyhedron $\L_{\bb}$ is achieved on $V(\bb)$. Thus, we have:
\begin{align}
d_H(\L_{\bb},\L_{\bb'}) &= \max [  \ \max_{\xx\in \L_{\bb}} d(\xx, \L_{\bb'}) \ , \max_{\xx\in \L_{\bb'}} d(\L_{\bb}, \xx) ] \\
 = &\max [  \ \max_{\xx\in V(\bb)} d(\xx, \L_{\bb'}) \ , \max_{\xx\in V(\bb')} d(\L_{\bb}, \xx) ] \\
 \leq  \max  & [  \ \max_{\xx\in V(\bb)} d(\xx, V({\bb'})) \ , \max_{\xx\in V(\bb')} d(V(\bb), \xx) ]  \\ 
 = & d_H\left( V(\bb),V(\bb') \right) \ . 
\end{align}
Let's denote by $ H_i(\bb)$ the hyperplane $\{\xx : A_i \xx = \bb_i\}$ and by $H_i^-(\bb)=\{\xx : A_i \xx \leq  \bb_i\}$ and $H_i^+(\bb)=\{\xx : A_i \xx \geq \bb_i\}$  the associated half-spaces. Then $\L_{\bb}=\bigcap_{i\in[1,m]} H_i^-(\bb)$.

\newcommand{\vv}{\bm{v}}
Now fix $\bb_0 \in \B$ and consider $\vv \in V(\bb_0)$. By definition, $\vv$ is the intersection of hyperplanes $\bigcap_{i\in K} H_i(\bb_0)$   where $K \subset \{1,\dots, m\}$ is maximal (note that $k \eqd \text{card}(K) \geq n$ otherwise $\vv$ can not be a vertex).

For $J\in \{ 1, \dots, m\}$, let $A_J$ denote the submatrix of $A$ obtained by considering the rows $A_j$ for $j\in J$. Let us introduce the sets of derived points (points of the arangement) of the set $K$, for each $\bb \in \mathcal{B}$: 
$$\V_K(\bb) \eqd \{ \xx \in \mathbb{R}^n \ ;  \exists J\subset K \ ; \ A_J \text{ is  invertible and }  \ \xx=A_J^{-1}\bb  \}   \ . $$
By definition, $\V_K(\bb_0)= \{ \vv \}$ and, for each $\bb\in \B$, $\V_K(\bb)$ is a set of at most $\binom{k}{n}$ elements.

First, note that for each $\bb \in \B $ and $\vv'\eqd A_J^{-1}\bb \in \V_K(\bb)$, one has: 
\begin{equation}
\label{eq:vertex_induced_bound}
\|\vv-\vv'\|= \| A_J^{-1}\bb_0- A_J^{-1} \bb\| \leq \|A_J^{-1}\| \times \| \bb_0 - \bb \| \leq \alpha \| \bb_0 - \bb \|
\end{equation}
where $\alpha \eqd \displaystyle\max_{A_J \text{ invertible} } \|  A_J^{-1}  \|$. 

Then, consider $\eta \eqd \min_{j \in \{1\dots m\}\setminus K} d\left(\vv,H_j^+(\bb)\right) $. By the maximality of $K$, $\eta>0$. As $\xx\mapsto  d(\xx,H_j^+)$ is continuous for each $j$, and from \eqref{eq:vertex_induced_bound} , there exists $\delta>0$ such that: $\|\bb_0 - \bb\| \leq \delta \ \Longrightarrow \forall \vv' \in \V_K(\bb), \min_{j \in \{1\dots  m\} \setminus K} d \left( \vv'  ,H_j^+(\bb)\right)> 0 $.

Next, we show that, for $\bb$ such that  $\| \bb_0 - \bb \| \leq \delta$, there exists $\vv'\in \V_K(\bb) \cap V(\bb)$. We proceed by induction on $k-n$. 
\renewcommand{\SS}{\mathcal{S}}

If $k=n$, then $\vv=A_K^{-1}\bb_0$ and for any $\bb $ in the ball $\SS_\delta(\bb_0)$, $\V_K(\bb)=\{A_K^{-1}\bb \}$.  Thus $\vv'\eqDef A_K^{-1}\bb$ verifies  $A_K \vv'=\bb_K$, and $A_j \vv' < b_j$ for all $j\notin K$, thus $\vv'$ belongs to  $V(\bb)$.

If $k=n+t$ with $t	\geq 1$, there exists  $j_0 \in K$ such that with $K'=K\setminus \{j_0\}$, $\V_{K'}(\bb_0)= \{\vv\}$. Consider the polyhedron $P=\bigcap_{i \in K'} H_i^-(\bb_0)$. By induction, there exists $J\subset K'$ such that $A_J^{-1}\bb$ is a vertex of $P$. If it satisfies also $A_{j_0}\xx \leq b_{j_0}$ then it is an element of $V(\bb)$. Else, consider a vertex $\vv'$ of the polyhedron $P \cap H_{j_0}^-(\bb)$ on the facet associated with $H_{j_0}(\bb)$. Then, $\vv'\in V_K(\bb)$ and, as $\bb\in \SS_\delta(\bb_0)$, it verifies $A_j \vv' < b_j $ for all  $j \notin K$, thus $\vv' \in V(\bb)$.

Thus, in any case and for $ \bb\in \SS_\delta(\bb_0), \ d(\vv,V(\bb)) \leq \|\vv-\vv'\| \leq \alpha \| \bb_0 - \bb \| $ and finally  $d \left(V(\bb_0), V(\bb) \right) \leq  \alpha \| \bb_0 - \bb \|$.
The collection $\left\{ \SS_{\delta_{\bb}(\bb) } \right\}_{\bb \in \B} $ is an open covering of the compact set $\B$, thus there exists a finite subcollection of cardinal $r$ that also covers $\B$, from which we deduce that there exists $D\leq \max(r\alpha)$ such that:
\begin{equation*}
\forall \bb, \bb' \in \B, \ d_H \left(V(\bb'), V(\bb) \right) \leq  D \| \bb' - \bb \| \ . \end{equation*}
\end{proof}
\else \fi

%

\begin{proof}[Proof of \Cref{prop:approx_seq_finitedim}]

First, to show the divergence of the number of players and their infinitesimal weight, we have  to follow \Cref{rm:approx_finitedim_infininitesimal} where we consider an additional splitting of the segment $\Theta=[0,1]$. In that case, we will have $I\esnu \geq \snu$ hence goes to positive infinity and for each $\nn \in\I\esnu, \mu(\Theta_{\nn}\esnu)\leq \frac{1}{\snu}$ hence goes to $0$.

Then, the convergence of the strategy sets follows from the fact that, for each $\nn\in\I\esnu$:
\begin{equation}
\txt\frac{1}{\mu_{\nn}\esnu}\X_{\nn}\esnu =\left\{\xx \in \rit^T  | \A\xx \leq \txt\frac{1}{\mu_{\nn}\esnu}\txt\int_{\Theta\esnu_{\nn}}\bb_{\th}\, \text{d}\th \right\} \ ,
\end{equation}
and from \Cref{lem:distance_perturb_polyhedra} which implies that, for each $\th'\in \Theta_{\nn}\esnu$:
\begin{align}
d_H\left( \X_{\th'}, \txt\frac{1}{\mu_{\nn}\esnu}\X_{\nn}\esnu \right) \leq C_0 \norm{ \bb_{\th'}-  \frac{1}{\mu_{\nn}\esnu}\int_{\Theta\esnu_{\nn}}\bb_{\th}\, \text{d}\th }
\leq \frac{C_0}{\nu} \norm{ \bm{\ob} -\bm{\ub} }_2 \ .
\end{align}

Finally, the convergence of utility functions comes from the Lipschitz continuity in $\ss$. For each  $\nn \in \I\esnu$ and  each $\th'\in \Theta_{\nn}\esnu$, we have:
\begin{align}
& \max_{\xx\in\B_0(M)} \norm{ \nabla u\esnu_{\nn}\left(\txt {\mu_{\nn}\esnu } \xx \right) - \nabla u_{\th'}(\xx)}_2  \\
 &= \max_{\xx\in\B_0(M)}  \norm{\nabla_{\xx} u\left(\txt\frac{1}{\mu_{\nn}\esnu}\txt\int_{\Theta_{\nn}\esnu} \ss_{\th} \text{d}\th ,  \ \txt\frac{1}{\mu_{\nn}\esnu} \xx \right) -\nabla_{\xx} u(\ss_{\th'},\xx) }_2  \\
&  \leq L_1 \norm{\frac{1}{\mu_{\nn}\esnu}\int_{\Theta_{\nn}\esnu} \ss_{\th} \text{d}\th - \ss_{\th'} }_2 \\
&\leq  \frac{L_1}{\snu} \norm{\bm{\os}-\bm{\us} }_2 \ , 
\end{align}
which terminates the proof.
\end{proof}

\begin{remark} Instead of using the average value on $\Theta_{\nn}\esnu $  in \Cref{prop:approx_seq_finitedim}, one could consider any value within the set $\Theta_{\nn}\esnu$. 
\end{remark}

Of course, the number of players considered in \Cref{prop:approx_seq_finitedim} is exponential in the dimensions of the parameters $K+q$, which can be large in practice. As a result, the number of players in the approximating atomic games considered can be very large, which will make the NE computation really long. Taking advantage of the continuity of the parametering functions  and following the approach of \Cref{prop:approx_seq_continuous} gives a smaller (in terms of number of players) approximating atomic instance.

\section{Numerical Application}
\label{sec:numericalExp}

We consider a population of consumers $\Theta=[0,1]$  with an energy demand  distribution $\th \mapsto E_\th$. Each consumer $\th$ splits her demand over $\T\eqDef\{O,P\}$, so that her feasibility set is $\X_\th\eqDef \{(\x_{\th,O},\x_{\th,P})\in\rit_+^2 \big| \x_{\th,O}+\x_{\th,P}=E_\th \}$. The index $O$ stands for offpeak-hours with a lower price $c_O(\xag)=\xag$ and $P$ are peak-hours with a higher price $c_P(\xag)=1+2\xag$.
The energy demand and the utility function in the nonatomic game are chosen as the piecewise continuous functions: 
 \vspace{-0.15cm}
\begin{equation*} 
E_\th\eqDef \left\{ \begin{array}{l}
  \sin(\pi \th)\, \text{ if }\th <0.7 \\
  0.3  \qquad\text{ if } \th \geq 0.7 
\end{array}
 \right. \ ,\ u_\th(\xx_\th)\eqDef \omega_\th \times \norm{ \yy_\th-\xx_\th}_2^2  \ ,
 \vspace{-0.15cm}
 \end{equation*}
 with $\yy_\th= (0,E_\th)$ the preference of user $\th$ for period $P$ and $\omega_\th\eqDef \th$ the preference  weight of player $\th$.
 
  We consider approximating atomic games by splitting $\Theta$ uniformly (\Cref{subsec:approx_uniform}) in 5, 20, 40 and 100 segments (players). We compute the NE for each atomic game using the best-response dynamics (each best-response is computed as a QP using algorithm  \cite{brucker1984n}, see \cite{PaulinArxivAnalysisImpl17} for convergence properties) and until the KKT optimality conditions for each player are satisfied up to an absolute error of $10^{-3}$.
\Cref{fig:cvgNahsWE}  shows, for each NE associated to the atomic games with 5, 20, 40 and 100 players, the linear interpolation of the load on the peak period $\x_{\th,P}$ (red filled area), while the load on the offpeak period can be observed as $\x_{\th,O}=E_\th-x_{\th,P}$.
\begin{figure}[ht!]
\centering
\vspace{-0.4cm}
\includegraphics[width=0.95\columnwidth]{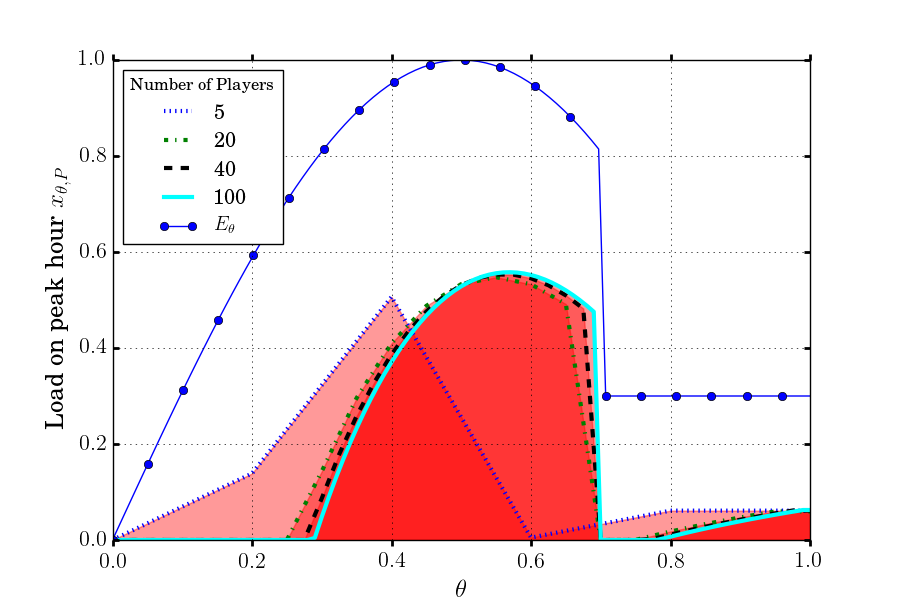}
\vspace{-0.3cm}
\caption{Convergence of the Nash Equilibrium profiles to a Wardrop Equilibrium profile}
\vspace{-0.3cm}
\label{fig:cvgNahsWE}
\end{figure}
We observe the convergence to the limit WE of the nonatomic game as stated in \Cref{thm:converge_with_u}. We also observe that the only discontinuity point  of $\th \mapsto \x^*_{\th,P}$ comes from the discontinuity of $\th\mapsto E_\th$ at $\th=0.7$, as stated in \Cref{prop:continuity_from_parameters}.

\section*{Conclusion}

This paper gives quantitative results on the convergence of Nash equilibria, associated to atomic games approximating a nonatomic routing game, to the Wardrop equilibrium of this nonatomic game. These results are obtained under different differentiability and monotonicity assumptions. Several directions can be explored to continue this work: first, we could analyze how the given theorems could be modified to apply in case of nonmonotone and nondifferentiable functions. Another natural extension would be to consider routing games on nonparallel networks or even general aggregative games: in that case, the separable costs structure is lost and the extension is therefore not trivial.

\section*{Acknowledgments}
\noindent We thank St\'ephane Gaubert, Marco Mazzola, Olivier Beaude and Nadia Oudjane for their insightful comments.


\bibliographystyle{ims}
\bibliography{../../bib/shortJournalNames,../../bib/biblio1,../../bib/biblio2,../../bib/biblio3,../../bib/biblioBooks}

\begin{thebibliography}{20}
\expandafter\ifx\csname natexlab\endcsname\relax\def\natexlab#1{#1}\fi
\expandafter\ifx\csname url\endcsname\relax
  \def\url#1{\texttt{#1}}\fi
\expandafter\ifx\csname urlprefix\endcsname\relax\def\urlprefix{URL }\fi
\providecommand{\eprint}[2][]{\url{#2}}

\bibitem[{Altman et~al.(2002)Altman, Kameda and Hosokawa}]{altman2002nash}
\textsc{Altman, E.}, \textsc{Kameda, H.} and \textsc{Hosokawa, Y.} (2002).
\newblock Nash equilibria in load balancing in distributed computer systems.
\newblock \textit{International Game Theory Review}, \textbf{4} 91--100.

\bibitem[{Atzeni et~al.(2013)Atzeni, Ord{\'o}{\~n}ez, Scutari, Palomar and
  Fonollosa}]{atzeni2013demand}
\textsc{Atzeni, I.}, \textsc{Ord{\'o}{\~n}ez, L.~G.}, \textsc{Scutari, G.},
  \textsc{Palomar, D.~P.} and \textsc{Fonollosa, J.~R.} (2013).
\newblock Demand-side management via distributed energy generation and storage
  optimization.
\newblock \textit{IEEE Trans. Smart Grid}, \textbf{4} 866--876.

\bibitem[{Batson(1987)}]{batson1987combinatorial}
\textsc{Batson, R.~G.} (1987).
\newblock Combinatorial behavior of extreme points of perturbed polyhedra.
\newblock \textit{Journal of mathematical analysis and applications},
  \textbf{127} 130--139.

\bibitem[{Bhaskar et~al.(2009)Bhaskar, Fleischer, Hoy and
  Huang}]{bhaskar2009notunique}
\textsc{Bhaskar, U.}, \textsc{Fleischer, L.}, \textsc{Hoy, D.} and
  \textsc{Huang, C.-C.} (2009).
\newblock Equilibria of atomic flow games are not unique.
\newblock In \textit{Proceedings of the Twentieth Annual ACM-SIAM Symposium on
  Discrete Algorithms}. 748--757.

\bibitem[{Brucker(1984)}]{brucker1984n}
\textsc{Brucker, P.} (1984).
\newblock An $o(n)$ algorithm for quadratic knapsack problems.
\newblock \textit{Oper. Res. Lett.}, \textbf{3} 163--166.

\bibitem[{Facchinei and Pang(2007)}]{facchinei2007finite}
\textsc{Facchinei, F.} and \textsc{Pang, J.-S.} (2007).
\newblock \textit{Finite-dimensional variational inequalities and
  complementarity problems}.
\newblock Springer Science \& Business Media.

\bibitem[{Gentile et~al.(2017)Gentile, Parise, Paccagnan, Kamgarpour and
  Lygeros}]{gentile2017nash}
\textsc{Gentile, B.}, \textsc{Parise, F.}, \textsc{Paccagnan, D.},
  \textsc{Kamgarpour, M.} and \textsc{Lygeros, J.} (2017).
\newblock Nash and wardrop equilibria in aggregative games with coupling
  constraints.
\newblock \textit{arXiv preprint arXiv:1702.08789}.

\bibitem[{Haurie and Marcotte(1985)}]{haurie1985relationship}
\textsc{Haurie, A.} and \textsc{Marcotte, P.} (1985).
\newblock On the relationship between nash—cournot and wardrop equilibria.
\newblock \textit{Networks}, \textbf{15} 295--308.

\bibitem[{Jacquot et~al.(2017)Jacquot, Beaude, Gaubert and
  Oudjane}]{PaulinArxivAnalysisImpl17}
\textsc{Jacquot, P.}, \textsc{Beaude, O.}, \textsc{Gaubert, S.} and
  \textsc{Oudjane, N.} (2017).
\newblock Analysis and implementation of an hourly billing mechanism for demand
  response management.
\newblock \textit{arXiv 1712.08622}.

\bibitem[{Koutsoupias and Papadimitriou(1999)}]{koutsoupias1999worst}
\textsc{Koutsoupias, E.} and \textsc{Papadimitriou, C.} (1999).
\newblock Worst-case equilibria.
\newblock In \textit{Proc. STACS}. Springer, 404--413.

\bibitem[{Marcotte and Zhu(1997)}]{marcotte1997equilibria}
\textsc{Marcotte, P.} and \textsc{Zhu, D.-L.} (1997).
\newblock Equilibria with infinitely many differentiated classes of customers.
\newblock In \textit{Complementarity and Variational Problems, State of Art}.
  SIAM, 234--258.

\bibitem[{Milchtaich(2000)}]{milchtaich2000generic}
\textsc{Milchtaich, I.} (2000).
\newblock Generic uniqueness of equilibrium in large crowding games.
\newblock \textit{Math. Oper. Reas.}, \textbf{25} 349--364.

\bibitem[{Nash et~al.(1950)}]{nash1950equilibrium}
\textsc{Nash, J.~F.} \textsc{et~al.} (1950).
\newblock Equilibrium points in n-person games.
\newblock \textit{Proc. Nat. Acad. Sci. USA}, \textbf{36} 48--49.

\bibitem[{Nisan et~al.(2007)Nisan, Roughgarden, Tardos and
  Vazirani}]{nisan2007algorithmic}
\textsc{Nisan, N.}, \textsc{Roughgarden, T.}, \textsc{Tardos, E.} and
  \textsc{Vazirani, V.~V.} (2007).
\newblock \textit{Algorithmic game theory}, vol.~1.
\newblock Cambridge University Press Cambridge.

\bibitem[{Orda et~al.(1993)Orda, Rom and Shimkin}]{orda1993competitive}
\textsc{Orda, A.}, \textsc{Rom, R.} and \textsc{Shimkin, N.} (1993).
\newblock Competitive routing in multiuser communication networks.
\newblock \textit{EEE/ACM Trans. Networking}, \textbf{1} 510--521.

\bibitem[{Rath(1992)}]{rath1992direct}
\textsc{Rath, K.~P.} (1992).
\newblock A direct proof of the existence of pure strategy equilibria in games
  with a continuum of players.
\newblock \textit{Economic Theory}, \textbf{2} 427--433.

\bibitem[{Rosen(1965)}]{rosen1965existence}
\textsc{Rosen, J.~B.} (1965).
\newblock Existence and uniqueness of equilibrium points for concave n-person
  games.
\newblock \textit{Econometrica} 520--534.

\bibitem[{Rosenthal(1973)}]{rosenthal1973network}
\textsc{Rosenthal, R.~W.} (1973).
\newblock The network equilibrium problem in integers.
\newblock \textit{Networks}, \textbf{3} 53--59.

\bibitem[{Wan(2012)}]{wan2012coalition}
\textsc{Wan, C.} (2012).
\newblock Coalitions in nonatomic network congestion games.
\newblock \textit{Math. Oper. Reas.}, \textbf{37} 654--669.

\bibitem[{Wardrop(1952)}]{wardrop1952some}
\textsc{Wardrop, J.~G.} (1952).
\newblock Some theoretical aspects of road traffic research.
\newblock In \textit{Proceedings of the Institute of Civil Engineers, Part II,
  1}. 325--378.

\end{thebibliography}

\end{document}